\DeclarePairedDelimiterX{\inp}[2]{\langle}{\rangle}{#1, #2}
\newcommand*\bigcdot{\mathpalette\bigcdot@{.5}}
\newcommand*\bigcdot@[2]{\mathbin{\vcenter{\hbox{\scalebox{#2}{$\m@th#1\bullet$}}}}}
\newcommand{\muspace}{\mspace{1mu}}
\DeclareRobustCommand{\scond}{\mathchoice{\muspace\vert\muspace}{\vert}{\vert}{\vert}}
\DeclareRobustCommand{\discint}{\mathchoice{\mspace{-1.5mu}:\mspace{-1.5mu}}{\mspace{-1.5mu}:\mspace{-1.5mu}}{:}{:}}
\newcommand{\suchthat}{\mathchoice{\colon}{\colon}{:\mspace{1mu}}{:}}
\newcommand{\Cc}{\mathcal{C}}
\newcommand{\Fc}{\mathcal{F}}
\newcommand{\Gc}{\mathcal{G}}
\newcommand{\Mc}{\mathcal{M}}
\newcommand{\Vc}{\mathcal{V}}
\newcommand{\Xv}{{\bf X}}
\newcommand{\Yv}{{\bf Y}}
\newcommand{\Zv}{{\bf Z}}
\newcommand{\av}{{\bf a}}
\newcommand{\bv}{{\bf b}}
\newcommand{\mv}{{\bf m}}
\newcommand{\xv}{{\bf x}}
\newcommand{\yv}{{\bf y}}
\newcommand{\zv}{{\bf z}}
\newcommand{\sv}{{\bf s}}
\newcommand{\alphav}{{\boldsymbol{{\alpha}}}}
\newcommand{\muv}{{\bm{\mu}}}
\newcommand{\Yt}{{\tilde{Y}}}
\newcommand{\kt}{{\tilde{k}}}
\def\a{\alpha}
\def\d{\delta}
\DeclareMathOperator\E{\mathsf{E}}
\let\P\relax
\DeclareMathOperator\P{\mathsf{P}}
\newcommand{\Unif}{\mathrm{Unif}}
\newcommand{\Binom}{\mathrm{Binom}}
\newcommand\eg{e.g.,\xspace}
\newcommand\ie{i.e.,\xspace}
\def\textiid{i.i.d.\@\xspace}
\newcommand\iid{\ifmmode\text{ i.i.d. } \else \textiid \fi}
\newcommand{\Real}{\mathbb{R}}
\newcommand{\Natural}{\mathbb{N}}
\newcommand{\Integer}{\mathbb{Z}}
\newcommand{\ones}{\mathds{1}}
\def\mathllap{\mathpalette\mathllapinternal}
\def\mathllapinternal#1#2{%
  \llap{$\mathsurround=0pt#1{#2}$}}
\def\clap#1{\hbox to 0pt{\hss#1\hss}}
\def\mathclap{\mathpalette\mathclapinternal}
\def\mathclapinternal#1#2{%
  \clap{$\mathsurround=0pt#1{#2}$}}
\let\oldstackrel\stackrel
\renewcommand{\stackrel}[2]{\oldstackrel{\mathclap{#1}}{#2}}
\DeclarePairedDelimiterX{\infdivx}[2]{(}{)}{%
  #1\;\delimsize\|\;#2%
}
\renewcommand{\hbar}{h\mathllap{\overline{\vphantom{h}\hphantom{\rule{4.6pt}{0pt}}}\mspace{0.77mu}}}
\newcommand{\urltilde}{\kern -.06em\lower -.06em\hbox{~}\kern .02em}
\newcommand{\ind}[1]{\ones_{#1}}
\DeclarePairedDelimiterX{\norm}[1]{\lVert}{\rVert}{#1}
\DeclarePairedDelimiterX{\abs}[1]{\lvert}{\rvert}{#1}
\newcommand*\diff{\mathop{}\!\mathrm{d}}
\let\oldpartial\partial
\renewcommand*{\partial}{\mathop{}\!\oldpartial}
\newcommand{\defeq}{\mathrel{\mathop{:}}=}
\newcommand{\wealth}{\mathsf{W}}
\newcommand{\hr}{\mathsf{HR}}
\newcommand{\chr}{\mathsf{CHR}}
\newcommand{\ob}{\mathbf{o}}
\newcommand{\nv}{\mathbf{n}}
\newcommand\numberthis{\addtocounter{equation}{1}\tag{\theequation}}
\newtheorem{theorem}{Theorem}
\newtheorem{lemma}[theorem]{Lemma}
\newtheorem{proposition}[theorem]{Proposition}
\theoremstyle{definition}
\declaretheoremstyle[
  headfont=\color{red}\normalfont\bfseries,
  bodyfont=\color{red}\normalfont\itshape,
]{colored}
\declaretheoremstyle[
  headfont=\color{blue}\normalfont\bfseries,
  bodyfont=\color{blue}\normalfont\itshape,
]{resolved}
\declaretheoremstyle[
  headfont=\color{blue}\normalfont\bfseries,
  bodyfont=\color{blue}\normalfont\itshape,
]{blue}
\declaretheoremstyle[
  headfont=\color{red}\normalfont\bfseries,
  bodyfont=\color{red}\normalfont\itshape,
]{red}
\newcommand{\kb}{\mathbf{k}}
\renewcommand{\E}{\mathop{\mathsf{E}}}
\newcommand{\ev}{\mathbf{e}}
\newcommand{\ov}{\mathbf{o}}
\newcommand{\pv}{\mathbf{p}}
\renewcommand{\emptyset}{\varnothing}
\renewcommand{\epsilon}{\varepsilon}
\renewcommand{\tilde}{\widetilde}
\renewcommand{\hat}{\widehat}
\newlength{\depthofsumsign}
\newcommand*\dotp{\mathpalette\dotp@{.5}}
\newcommand*\dotp@[2]{\mathbin{\vcenter{\hbox{\scalebox{#2}{$\m@th#1\bullet$}}}}}
\newcommand{\bb}{\mathbf{b}}
\newcommand{\up}{\mathrm{up}}
\newcommand\footnoteref[1]{\protected@xdef\@thefnmark{\ref{#1}}\@footnotemark}
\renewcommand{\ind}[1]{\mathbb{I}\{#1\}}
\icmltitlerunning{Gambling-Based Confidence Sequences for Bounded Random Vectors}
\begin{document}

\twocolumn[
\icmltitle{Gambling-Based Confidence Sequences for Bounded Random Vectors}

\icmlsetsymbol{equal}{*}

\begin{icmlauthorlist}
\icmlauthor{J. Jon Ryu}{mit}
\icmlauthor{Gregory W. Wornell}{mit}
\end{icmlauthorlist}

\icmlaffiliation{mit}{Department of EECS, MIT, Cambridge, Massachusetts, USA}

\icmlcorrespondingauthor{J. Jon Ryu}{\href{jongha@mit.edu}{jongha@mit.edu}}

\icmlkeywords{confidence sequence, time-uniform confidence sets, multinomial observation, count data analysis}

\vskip 0.3in
]

\printAffiliationsAndNotice{}  %

\begin{abstract}
A confidence sequence (CS) is a sequence of confidence sets that contains a target parameter of an underlying stochastic process at any time step with high probability. This paper proposes a new approach to constructing CSs for means of bounded multivariate stochastic processes using a general gambling framework, extending the recently established coin toss framework for bounded random processes. The proposed gambling framework provides a general recipe for constructing CSs for categorical and probability-vector-valued observations, as well as for general bounded multidimensional observations through a simple reduction. 
This paper specifically explores the use of the mixture portfolio, akin to Cover's universal portfolio, in the proposed framework and investigates the properties of the resulting CSs. Simulations demonstrate the tightness of these confidence sequences compared to existing methods. When applied to the sampling without-replacement setting for finite categorical data, it is shown that the resulting CS based on a universal gambling strategy is provably tighter than that of the posterior-prior ratio martingale proposed by Waudby-Smith and Ramdas.

\end{abstract}

\newcommand{\Simplex}{\Delta}
\newcommand{\Fv}{\mathbf{F}}
\newcommand{\Kv}{\mathbf{K}}
\newcommand{\kv}{\mathbf{k}}
\newcommand{\qv}{\mathbf{q}}
\newcommand{\Km}{\mathsf{K}}
\newcommand{\sprod}{{\textstyle\prod}}
\renewcommand{\kt}{\mathsf{KT}}
\renewcommand{\up}{\mathsf{UP}}
\renewcommand{\av}{\boldsymbol{\a}}
\newcommand{\Dir}{\mathrm{Dir}}
\allowdisplaybreaks

\section{Introduction}
Time-uniform (or anytime-valid) confidence sets, which is often called a \emph{confidence sequence} (CS), is a sequence of sets that contains a target parameter of the underlying stochastic process \emph{at any time} with high probability. 
The time-uniformity guarantee implies that a practitioner can make a decision by continuously acquiring new data in a sequential manner, without the concern of the so-called p-hacking problem with the classical (non-time-uniform) confidence sets~\citep{Ramdas--Ruf--Larsson--Koolen2020}.
The notion of time-uniform guarantee dates back to \citet{Hoeffding1963,Darling--Robbins1967,Lai1976}, but the idea has regained attention only recently with the new modern applications such as A/B testing~\citep{Evidently2023} and election audits~\citep{Waudby-Smith--Stark--Ramdas2021}.
As tight confidence sequences can directly lead to saving various resources in practical sequential decision making applications, there has been a surge of interest in the research community, \eg \citep{Waudby-Smith--Ramdas2020b,Howard--Ramdas--McAuliffe--Sekhon2021,Orabona--Jun2021}, to name a few. 

While the field is rapidly evolving with new theories and constructions for CS~\citep{Waudby-Smith--Ramdas2020b,Howard--Ramdas--McAuliffe--Sekhon2021,Jun--Orabona2019,Orabona--Jun2021,Manole--Ramdas2023,Shekhar--Ramdas2023}, we remark that most of the development has been especially for scalar-valued processes until recently; a few exceptions include \citep{Waudby-Smith--Ramdas2020a} for categorical observation in without replacement sampling, and \citep{Whitehouse--Wu--Ramdas2023,Chugg--Wang--Ramdas2023} for unbounded vectors.

The goal of this paper is to develop a set of new methods for constructing CSs for \emph{bounded, vector-valued} stochastic processes. 
Note that there exist natural bounded vector-valued data in practice, a canonical example of which is a categorical data.
For bounded scalar-valued processes, \citet{Waudby-Smith--Ramdas2020b} proposed a coin-toss-based framework, studying several gambling strategies and properties of the resulting CSs. 
In the same framework, \citet{Orabona--Jun2021} considered applying the celebrated algorithm of Cover's universal portfolio~\citep{Cover1991} in information theory, proposed its tight outer approximations, and established analytical properties of their algorithms.
A subsequent work of \citet{Ryu--Bhatt2022} further explored the idea of universal gambling, arguing that the coin-toss framework can be understood as the continuous two-horse race, which admits more natural interpretation and closes the conceptual gap between CS and universal gambling.
\citet{Shekhar--Ramdas2023} recently established near-optimality of gambling-based CSs.

Given the elegance of the coin-toss framework for bounded, scalar-valued processes and its empirical tightness, especially in the small-sample regime~\citep{Orabona--Jun2021}, it is tempting to ask whether an analogous gambling framework exists for bounded, vector-valued processes. In this paper, we provide an affirmative answer: starting from the two-horse race formulation of \citet{Ryu--Bhatt2022}, we extend the existing coin-toss framework~\citep{Waudby-Smith--Ramdas2020b} to a general gambling framework for bounded, vector-valued processes. The key observation that leads to the multivariate extension is that the coin-toss framework essentially constructs confidence sequences for binary probability vectors, each of which happens to be characterized by a single number.
Going forward, we explore the resulting CSs from this framework when we use a universal gambling strategy such as \citet{Cover1991}'s universal portfolio as in \citep{Orabona--Jun2021,Ryu--Bhatt2022} for scalar-valued processes, and study their theoretical properties and empirical performance. 

The rest of the paper is organized as follows.
In Section~\ref{sec:framework}, we gradually build up the idea for the general gambling framework, by starting from the standard definitions, basic tools, and the coin-toss framework for scalar-valued processes.
In the next two sections, we study the resulting CSs for the categorical observation (Section~\ref{sec:categorical}) and probability-valued observation (Section~\ref{sec:probability}), showing the nice properties of CSs resulting from universal gambling strategies.
In Section~\ref{sec:general}, we briefly describe how the general, bounded vector processes can be handled as probability-vector valued observations.
We conclude with remarks in Section~\ref{sec:conclusion}.

\section{General Gambling Framework}
\label{sec:framework}
In this section, we illustrate the key idea of the paper on how to construct a confidence sequence for vector-valued observations via gambling.

\subsection{Preliminaries}
We start with a few definitions and notation.
We use a boldface variable, \eg $\xv$, to denote a vector-valued observation.
We use a capital letter $X_t$ (or $\Xv_t$) to denote a random variable (or a random vector) and a small letter $x_t$ (or $\xv_t$) for its realization.
We use the shorthand notation $\xv_i^j\defeq(x_i,\ldots,x_j)$ and $\xv^t\defeq \xv_1^t$.
For an underlying stochastic process $\Yv^t$,
let $\Fc_t\defeq\sigma(\Yv^t)$ be the sigma-field generated by $\Yv^t$, where $\Fc_0$ is the trivial sigma-field.
We call the sequence $(\Fc_t)_{t=0}^\infty$ the \emph{canonical filtration}.

We assume that there exists an underlying, bounded vector-valued stochastic process $(\Yv_t)_{t=1}^\infty$ such that $\E[\Yv_t|\Fc_{t-1}]=\muv$ for any $t\ge1$ for some $\muv\in\Real^K$.
Our goal is to construct \emph{time-uniform} confidence sets, so that the target mean parameter $\muv$ is to be contained in the sequence at any time, with high probability.
Formally, we wish to find a sequence of set-valued functions, each of which maps the observations $\yv^{t}$ until then to a \emph{convex} subset $\Cc(\yv^{t};\delta)\subset \Real^{K}$, such that
\[
\P\{
\muv\in \Cc(\Yv^{t};\delta), \forall t\ge 1
\}\ge 1- \d,
\]
for any given desired parameter $\d\in(0,1)$.

Our development follows the standard technique based on the following inequality by \citet{Ville1939} for supermartingales,  which is a cornerstone of the most known confidence sequences. 
Different methods plug in different {supermartingales} given their own problem settings and considerations.
A stochastic process $(W_t)_{t=0}^\infty$ is said to be \emph{supermartingale} if $(W_t)_{t=0}^\infty$ is adapted to the canonical filtration and satisfy
\[
\E[W_t|\Fc_{t-1}]\le W_{t-1}
\]
for every $t\ge 1$. It is said to be \emph{martingale}, if the inequality holds with equality for any $t\ge 1$.

\begin{theorem}[Ville's inequality]
\label{thm:ville}
For a nonnegative supermartingale sequence $(W_t)_{t=0}^{\infty}$ with $W_0>0$, for any $\d>0$, %
\[
\P\Bigl\{\sup_{t\ge1} \frac{W_t}{W_0} \ge \frac{1}{\d}\Bigr\}\le \d.
\]
\end{theorem}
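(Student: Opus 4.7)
My plan is to combine a stopping-time argument with Markov's inequality, which is the standard route to Ville's inequality. The idea is to freeze the supermartingale the first time it crosses the level $W_0/\delta$ and then use the supermartingale property to bound the expectation of the stopped process by $W_0$.

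First I would define the stopping time
\[
\tau \defeq \inf\Bigl\{t \ge 1 : \frac{W_t}{W_0} \ge \frac{1}{\delta}\Bigr\},
\]
with the convention $\tau = \infty$ when the infimum is over the empty set. I would then consider, for each fixed $n \ge 1$, the stopped process $W_t^{(n)} \defeq W_{t\wedge \tau \wedge n}$. A routine check shows that a nonnegative supermartingale stopped at a bounded stopping time remains a nonnegative supermartingale adapted to $(\Fc_t)$, so the optional stopping theorem (in its elementary bounded-time form) gives $\E[W_{\tau\wedge n}] \le W_0$.

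Next I would apply Markov's inequality to the nonnegative random variable $W_{\tau \wedge n}$. On the event $\{\tau \le n\}$, the definition of $\tau$ forces $W_{\tau \wedge n} = W_\tau \ge W_0/\delta$, so
\[
\P\{\tau \le n\} \le \P\Bigl\{W_{\tau\wedge n} \ge \frac{W_0}{\delta}\Bigr\} \le \frac{\delta\, \E[W_{\tau\wedge n}]}{W_0} \le \delta.
\]
Finally, since the events $\{\tau \le n\}$ are monotone increasing in $n$ with union $\{\tau < \infty\} = \{\sup_{t\ge 1} W_t/W_0 \ge 1/\delta\}$, continuity of probability from below yields
\[
\P\Bigl\{\sup_{t\ge 1}\frac{W_t}{W_0} \ge \frac{1}{\delta}\Bigr\} = \lim_{n\to\infty}\P\{\tau \le n\} \le \delta,
\]
which is the claim. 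The only mildly delicate step is justifying $\E[W_{\tau\wedge n}] \le W_0$ for the bounded stopping time $\tau\wedge n$; this follows by writing $W_{\tau\wedge n} = W_0 + \sum_{t=1}^n (W_t - W_{t-1})\ind{\tau \ge t}$ and noting that $\{\tau \ge t\} \in \Fc_{t-1}$, so taking expectations term-by-term and using the supermartingale property of $(W_t)$ preserves the inequality. No measure-theoretic subtlety beyond this enters the argument, and there is no need to invoke Doob's maximal inequality separately.
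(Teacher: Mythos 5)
The paper offers no proof of this theorem: Ville's inequality is quoted as a classical result (citing Ville, 1939) and used as a black box, so there is no in-paper argument to compare yours against. Your route is the canonical one, and its core is sound: stopping at the first crossing of the level $W_0/\delta$, establishing $\E[W_{\tau\wedge n}]\le W_0$ via the telescoping decomposition $W_{\tau\wedge n}=W_0+\sum_{t=1}^n (W_t-W_{t-1})\ind{\tau\ge t}$ with $\{\tau\ge t\}\in\Fc_{t-1}$ (this correctly sidesteps any appeal to a general optional stopping theorem), applying Markov's inequality on $\{\tau\le n\}$, and passing to the limit in $n$.

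The one step you should tighten is the final identification of events. What your argument bounds is $\P\{\tau<\infty\}=\P\{\exists\, t\ge 1 \suchthat W_t/W_0\ge 1/\delta\}$, which is contained in, but not in general equal to, $\{\sup_{t\ge1}W_t/W_0\ge 1/\delta\}$: the supremum over infinitely many times can equal $1/\delta$ without the level ever being attained, in which case $\tau=\infty$ on that event. As written, the claimed equality of events (and hence the last display) is not justified. The standard one-line patch is to run your argument at the level $1/\delta-\epsilon$ for small $\epsilon>0$, which gives $\P\{\exists\, t \suchthat W_t/W_0\ge 1/\delta-\epsilon\}\le (1/\delta-\epsilon)^{-1}$; since $\{\sup_{t\ge1}W_t/W_0\ge 1/\delta\}\subseteq\{\exists\, t \suchthat W_t/W_0\ge 1/\delta-\epsilon\}$, letting $\epsilon\downarrow 0$ yields the stated bound. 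With that amendment your proof is complete and is the standard argument one would find in the literature the paper cites.
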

A general recipe to derive confidence sequences based on Ville's inequality can be briefly summarized as follows.
First, construct a sequence $(W_t(\yv^t;\mv))_{t=1}^\infty$ as a function of $\yv^t$ and a candidate parameter $\mv$ at each time $t\ge 1$.
Define
\begin{align}
\Cc(\yv^t;\d)\defeq \Bigl\{\mv\suchthat \frac{W_t(\yv^t;\mv)}{W_0(\emptyset;\mv)}<\frac{1}{\d} \Bigr\}.
\end{align}
If the induced stochastic process $(W_t(\Yv^t;\mv))_{t=1}^\infty$ whose randomness comes from $\Yv^t$ is (super)martingale for $\mv=\muv$, then it readily follows from Ville's inequality that $(\Cc(\Yv^t;\d))_{t=1}^\infty$ is a confidence sequence for $\muv$ of coverage $1-\d$.\footnote{We note in passing that the running intersection $(\cap_{i=1}^t \Cc_i(\yv^i;\d))_{t=1}^\infty$ is always tighter or equal to the original CS and also a valid CS, and thus this operation is always implicitly assumed.}
Since we expect the confidence sequence to be shrinking to the singleton that contains $\muv$ as fast as possible while maintaining the desired coverage, the sequence $(W_t(\yv^t;\mv))_{t=1}^\infty$ needs to be carefully constructed.
Ideally, the stochastic process $(W_t(\yv^t;\mv))_{t=1}^\infty$ should diverge quickly for all $\mv\neq \muv$, so that any $\mv\neq\muv$ can be excluded from $(\Cc(\Yv^t;\d))_{t=1}^\infty$, eventually letting it converge to the singleton $\{\muv\}$.
In summary, the (super)martingale property of the process for $\mv=\muv$ guarantees the constructed sequence to be a valid confidence sequence, and
the speed of the divergence of the stochastic process for $\mv\neq \muv$ will govern the speed of concentration to the target parameter.

\subsection{Coin-Toss Framework for $[0,1]$-Valued Processes}
One natural and elegant way of constructing a martingale sequence is to consider the wealth process of \emph{gambling}.
The high-level idea is that if a gambling is \emph{subfair} (\ie statistically not favorable to gamblers), then a gambler's wealth as a stochastic process must be supermartingale, that is, the gambler cannot earn money out of the gambling.
The idea of using gambling has been extensively studied by a line of recent works for bounded, scalar-valued stochastic processes~\citep{Waudby-Smith--Ramdas2020b,Orabona--Jun2021,Ryu--Bhatt2022,Shekhar--Ramdas2023}.
Below, we briefly overview the existing formulation and motivate our idea for the extension to multivariate processes.

We start from the standard betting framework introduced in \citep{Waudby-Smith--Ramdas2020b} for a conditionally mean-$\mu$ stochastic process $(Y_t)_{t=1}^\infty$ for some $\mu\in(0,1)$. 
\citet{Waudby-Smith--Ramdas2020b} introduced the definition of a \emph{capital process} $\Km_t(m)$ for any $m\in(0,1)$ as
\begin{align}
\label{eq:standard_convention}
\Km_t(y^t;m)\defeq \prod_{i=1}^t (1+ \lambda_i(m) \cdot (y_i-m)),
\end{align}
where $(\lambda_t(m))_{t=1}^\infty$ is a $[-1/(1-m),1/m]$-valued predictable sequence, \ie $\lambda_t(m)$ only depends on $Y^{t-1}$, but not on the future $Y_t^\infty$.
It is clear that the capital process is martingale for $m=\mu$, since
$\E[\Km_t(Y^t;\mu)|\Fc_{t-1}] = \Km_t(Y^{t-1};\mu)\cdot (1+\lambda_i(m) \cdot (\E[Y_t|\Fc_{t-1}]-\mu))= \Km_t(Y^{t-1};\mu)$.
Here, $(Y_t)_{t=1}^\infty$ can be viewed as the stochastic outcomes of coin betting, and the predictable sequence $(\lambda_t(m))_{t=1}^\infty$ can be understood as a causal betting strategy. \citet{Waudby-Smith--Ramdas2020b} studied various different betting strategies and their properties in terms of the resulting confidence sequences.
We remark, however, that it is nontrivial to extend this betting framework to vector-valued processes, as also pointed out in the discussion of \citet{Li--Li--Dai2023}.
One critical complication in this convention is that the range of the bet $\lambda_t(m)$ depends on the candidate parameter $m$.

More recently, \citet{Ryu--Bhatt2022} proposed an equivalent, yet alternative formulation of the betting framework, connecting it to the standard language of gambling in information theory~\citep[Ch.~6, 16]{Cover--Thomas2006}. 
The alternative formulation is based on the (continuous) two-horse race interpretation of the capital process~\eqref{eq:standard_convention}. 
The rule of the standard horse race is as follows.
There are two horses in the race, say horse 1 and horse 2, associated with \emph{odds} $o_1>0$ and $o_2>0$, respectively. These numbers indicate that if the horse $i$ wins, a \$1 bet on the horse results in a payoff of \$$o_i$. 
At each round $t\ge 1$, a gambler bets over the (yet unseen) outcome of the horse race with two horses. The bet is characterized by a single number $b_t\in[0,1]$, which is to distribute the current wealth $\wealth_t$ of the gambler to the two horses as \$$b_t\wealth_t$ and \$$(1-b_t)\wealth_t$.
After the race, the per-round multiplicative gain of the gambler's wealth can be written as 
\begin{align}
\label{eq:standard_two_horse_race}
(o_1b_t)^{Y_t}(o_2(1-b_t))^{1-Y_t},    
\end{align}
where $Y_t\defeq \ind{Z_t=1} \in \{0,1\}$ and $Z_t\in\{1,2\}$ denotes the index of the winning horse.
The continuous variation of the horse race allows the outcome $Y_t$ to be $[0,1]$-valued (hence continuous) and defines the per-round gain as
\begin{align}
\label{eq:continuous_two_horse_race}
o_1b_tY_t + o_2(1-b_t)(1-Y_t).
\end{align}
Note that \eqref{eq:continuous_two_horse_race} is equivalent to \eqref{eq:standard_two_horse_race} for discrete outcomes $Y_t\in\{0,1\}$.
\citet{Ryu--Bhatt2022} noted that the capital process~\eqref{eq:standard_convention} can be viewed as the wealth process from the continuous two-horse race setting with odds $o_1=\frac{1}{m}$, $o_2=\frac{1}{1-m}$ and a strategy $(b_t)_{t=1}^{\infty}$, as the per-round gain can be written as
\[
\frac{1}{m} b_t y_t + \frac{1}{1-m} (1-b_t)(1-y_t)
=1+\lambda_t(m) (y_t-m),
\]
if we define the {\emph{scaled bet}}
\[
\lambda_t(m) \defeq \frac{b_t}{m(1-m)}-\frac{1}{1-m}\in \Bigl[-\frac{1}{1-m}, \frac{1}{m}\Bigr].
\numberthis
\label{eq:signed_betting}
\]
Compared to \eqref{eq:standard_convention}, this reformulation separates out the amount of bet $b_t$ from $m$ and thus the range of the bet is simply $[0,1]$, being $m$-independent.

\subsection{Extension for Multivariate Processes: Overview}
We are now ready to introduce an extension of the betting framework for vector-valued processes following the convention of \citet{Ryu--Bhatt2022}. 
To proceed, we first introduce a general abstract formulation of the gambling as a multiplicative game.
Without loss of generality, we assume a unit dollar for the initial wealth of a gambler, \ie $\wealth_0=\$1$.
We assume that $\Mc\subseteq\Real_+^K$ is given as a set of \emph{odds vectors}.
For each round $t\ge 1$, a gambler chooses a bet $\bb_t=\bb_t(\xv^{t-1})\in\Delta^{K-1}$ as a function of the previous odds vectors $\xv^{t-1}\in\Mc^{t-1}$, where
we use $\Delta^{K-1}\defeq \{\pv\in\Real_{\ge0}^K \suchthat p_1+\ldots+p_K=1\}$ to denote the $(K-1)$-dimensional probability simplex in $\Real^K$.
After each round, the odds vector $\xv_t\in\Mc$ is revealed, and the gambler's wealth is multiplied by $\bb_t^\intercal\xv_t$,
and thus the multiplicative gain after round $t$ is
\[
\wealth_t(\xv^t) =  \prod_{i=1}^t \bb_i^\intercal\xv_i.
\]
As alluded to earlier, it is easy to show that the wealth process $(\wealth_t)_{t=1}^\infty$ attained by any causal strategy is a supermartingale, if the sequence of odds $(\Xv_t)_{t=1}^{\infty}$ \emph{subfair}, \ie $\E[\Xv_t|\Fc_{t-1}]\le\ones_K$ for any $t\ge 1$, where the inequality holds coordinatewise for $\ones_K\defeq [1,\ldots,1]\in\Real^K$.

\newcommand{\hrtwo}{$\hr_2$}
\newcommand{\chrtwo}{$\chr_2$}
\newcommand{\hrk}{$\hr_K$}
\newcommand{\chrk}{$\chr_K$}
We remark that this abstract setting subsumes the two-horse race \hrtwo{} and continuous two-horse race \chrtwo{} as special cases, which correspond to certain restricted sets of odds vectors $\Mc=\{o_1\ev_1,o_2\ev_2\}$ and $\Mc=\{o_1y\ev_1 + o_2(1-y)\ev_2\suchthat y\in [0,1]\}$, respectively.
For $K\ge2$ horses, we can consider the $K$-horse race \hrk{} and continuous $K$-horse race \chrk{}, where $\Mc=\{o_1\ev_1,\ldots,o_K\ev_K\}$ and $\Mc=\{o_1y_1\ev_1 +\ldots + o_Ky_K\ev_K\suchthat \yv\in \Simplex_K\}$ are assumed, respectively.
The most general scenario is when the odds vector $\xv_t$ can take any value from the positive orthant, \ie $\Mc=\Real_{>0}^K$. In this case, the gambling becomes equivalent to the \emph{stock market investment}~\citep{Cover1991}.

The key idea of this paper is to observe that one can consider \hrk{} and \chrk{} to derive CSs for $K$-categorical data and $\Delta^{K-1}$-valued stochastic processes.
Recall that \hrtwo{} and \chrtwo{} can be used to derive CSs of the mean parameter for $\{0,1\}$-valued and $[0,1]$-valued stochastic processes, respectively.
The extension is natural, once a single number $Y_t$ is mapped into the form of a binary probability vector $[Y_t,1-Y_t]$ lying in the probability simplex $\Delta^1$.
In general, for a $\Delta^{K-1}$-valued process $(\Yv_t)_{t=1}^\infty$ such that $\E[\Yv_t|\Fc_{t-1}]=\muv$, the gambling with odds vector sequence $(\Xv_t)_{t=1}^\infty$, which is defined as
\[
\Xv_t \defeq \frac{\Yv}{\muv}\defeq \Bigl(\frac{Y_{t1}}{\mu_1},\ldots, \frac{Y_{tK}}{\mu_K}\Bigr),
\numberthis\label{eq:odds_chr}
\]
is fair,
since $\E[\Xv_t|\Fc_{t-1}]=\ones$ by construction.
Note that the odds vector~\eqref{eq:odds_chr} corresponds to \hrk{} if $\Yv_t\in\{\ev_1,\ldots,\ev_K\}$ (\ie multinomial, or categorical observations) and \chrk{} if $\Yv_t\in\Simplex_K$ in general.
The embedding of the mean vector parameter in \eqref{eq:odds_chr} is the essence of the multivariate extension.

In the rest of this paper, we build up this idea more concretely.
In Section~\ref{sec:categorical}, we start by categorical observations, where our key ideas and the application of universal gambling strategies are illustrated with the simple setting of \hrk{}. In this special case, we also show that when reduced to the sampling without replacement setting, the derived CS yields a provably tighter CS than that of \citep{Waudby-Smith--Ramdas2020a}.
We then present its natural continuous extension for probability-vector-valued observations in Section~\ref{sec:probability}, where the underlying gambling becomes \chrk{}, a special case of the stock market investment. 
We also present how a general $[0,1]^{K-1}$-valued observations can be handled by a simple reduction to \chrk{}.
All proofs are deferred to Appendix~\ref{app:proofs}.

For a concise treatment of the confidence sequences to be developed, we define a few additional notations.
Let $\Vc_K\defeq \{\ev_1,\ldots,\ev_K\}$ denote the set of vertices of the probability simplex $\Delta^{K-1}$, \ie the set of one-hot vectors.
Let $\Gc_{K,t}\defeq\{\kv\in\Integer_{\ge 0}^{K}\suchthat k_1+\ldots+k_K=t\}$ for $t\ge 0$ denote the set of all nonegative-integer-valued vectors each of whose sum of the coordinates is equal to $t$.
For example, $\Gc_{K,0}=\{\boldsymbol{0}_K\}$ and $\Gc_{K,1}=\Vc_K$.
Since these objects will be frequently referred, we summarize them in Table~\ref{tab:notation}.

\begin{table}[tb]
    \centering
    \caption{Summary of notations.}
    \begin{tabular}{cl}
    \toprule
    Notation & Definition\\
    \midrule
        $\Vc_K$ & $\{\ev_1,\ldots,\ev_K\}$ \\
        $\Gc_{K,t}$ & $\{\kv\in\Integer_{\ge 0}^{K}\suchthat k_1+\ldots+k_K=t\}$ \\
    \bottomrule
    \end{tabular}
    \label{tab:notation}
\end{table}

\section{Categorical Observations}
\label{sec:categorical}
We first consider obtaining a confidence sequence for multinomial (or discrete categorical) observations by considering the standard $K$-horse race \hrk{}.
Hereafter, we refer to the horse race with odds $\ov\defeq(o_1,\ldots,o_K)$ by \hrk{}$(\ov)$.

For a $K$-categorical stochastic process $(Z_t)_{t=1}^{\infty}$ such that $\E[\ind{Z_t=j}|\Fc_{t-1}]=\mu_j$ for each $j\in[K]$ for some $\muv\in\Delta^{K-1}$, 
we can define its vector equivalent $\Zv_t\in\Vc_K\subset\Simplex_K$, where $Z_{tj}\gets \ind{Z_t=j}$ for each $j\in [K]$.
As hinted in the previous section, we consider the gambling \hrk{}$(\mv^{-1})$, where the odds vector corresponds to
\[
\Xv_t
\defeq {\mv^{-1}}\odot {\Zv_t}
=\Bigl(\frac{\ind{Z_{t1}=1}}{m_1}, 
\ldots,
\frac{\ind{Z_{tK}=1}}{m_K}\Bigr).
\]
Let $\wealth(\Zv^t;\ov)$ be the cumulative wealth from the horse race \hrk{}$(\ov)$. Since $\E[\Zv_t|\Fc_{t-1}]=\muv$, $\E[\Xv_t|\Fc_{t-1}]=\ones$ for $\mv=\muv$, and thus the resulting wealth process is martingale.
For $0<\d<1$, define
\[
\Cc(\Zv^t;\d)\defeq \Bigl\{
\mv\in\Simplex_K
\suchthat {\wealth(\Zv^t;\mv^{-1})} <\frac{1}{\d}
\Bigr\}.
\numberthis\label{eq:kt_cs}
\]
The reasoning in the previous section readily implies:
\begin{theorem}
\label{thm:hr}
For any causal gambling strategy, $(\Cc(\Zv^t;\d))_{t=1}^{\infty}$ is a CS with level $1-\d$.
\end{theorem}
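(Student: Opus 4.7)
The proof is essentially a direct instantiation of the general Ville-based recipe laid out in Section~\ref{sec:framework}, so the plan is simply to verify that each hypothesis holds. First I would fix the true mean $\mv = \muv$ and check that the odds process $\Xv_t = \muv^{-1}\odot \Zv_t$ is conditionally fair: coordinatewise, $\E[X_{tj}\cond \Fc_{t-1}] = \mu_j^{-1}\,\E[\ind{Z_{tj}=1}\cond \Fc_{t-1}] = \mu_j^{-1}\mu_j = 1$, so $\E[\Xv_t\cond \Fc_{t-1}]=\ones_K$. (When some $\mu_j=0$, the corresponding coordinate can be handled by a standard convention, \eg removing that horse from the race, since $Z_{tj}=0$ almost surely.)

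Next I would verify the martingale property of the wealth process at $\mv = \muv$. Since the gambling strategy $(\bb_t)_{t\ge 1}$ is causal, $\bb_t$ is $\Fc_{t-1}$-measurable, and $\wealth_{t-1}(\Zv^{t-1};\muv^{-1})$ is also $\Fc_{t-1}$-measurable. Hence
\[
\E\bigl[\wealth_t(\Zv^t;\muv^{-1})\bigcond \Fc_{t-1}\bigr]
= \wealth_{t-1}(\Zv^{t-1};\muv^{-1})\cdot \bb_t^\intercal \E[\Xv_t\cond \Fc_{t-1}]
= \wealth_{t-1}(\Zv^{t-1};\muv^{-1}),
\]
using $\bb_t^\intercal\ones_K = 1$ because $\bb_t\in\Simplex_{K-1}$. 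Combined with $\wealth_t\ge 0$ (all factors $\bb_t^\intercal\Xv_t\ge 0$ since $\bb_t$ and $\Xv_t$ are nonnegative) and $\wealth_0 = 1 > 0$, this gives a nonnegative martingale started at $1$.

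Finally, I would apply Ville's inequality (Theorem~\ref{thm:ville}) to conclude
\[
\P\Bigl\{\sup_{t\ge 1}\wealth(\Zv^t;\muv^{-1})\ge \tfrac{1}{\d}\Bigr\}\le \d,
\]
which is exactly $\P\{\muv\notin \Cc(\Zv^t;\d)\text{ for some } t\ge 1\}\le \d$ by the definition~\eqref{eq:kt_cs}. Taking complements yields the claimed $1-\d$ time-uniform coverage.

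There is no substantive obstacle here; the only minor subtlety is dealing with degenerate coordinates $\mu_j=0$ in the definition of $\muv^{-1}$, which is handled either by restricting $\mv$ to the relative interior of $\Simplex_K$ in the candidate set or by adopting the convention that a horse with zero probability never wins and so is safely excluded from the race. The coverage guarantee is uniform in the choice of (causal) gambling strategy, which is why the theorem is stated as holding ``for any causal gambling strategy'' — the choice of strategy only affects tightness, not validity.
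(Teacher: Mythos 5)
Your proof is correct and follows exactly the route the paper intends: the paper gives no separate argument for Theorem~\ref{thm:hr} beyond remarking that ``the reasoning in the previous section readily implies'' it, and that reasoning is precisely your chain (fairness of the odds $\E[\Xv_t\cond\Fc_{t-1}]=\ones_K$ at $\mv=\muv$, the martingale property of the nonnegative wealth process under any causal $\bb_t\in\Delta^{K-1}$, and Ville's inequality applied to the set defined in~\eqref{eq:kt_cs}). Your handling of the degenerate coordinates $\mu_j=0$ is a sensible extra precaution the paper does not spell out.
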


\subsection{Krichevsky--Trofimov CS}
Since any causal gambling strategy results in a valid confidence sequence, a natural question is then which gambling strategy should be used to derive tight confidence sequences.
In this section, we consider applying a celebrated yet simple \emph{mixture} betting strategy from universal gambling.
The resulting CS is easy to compute, naturally convex, and always nonvacuous even after observing one sample. 
As a shorthand notation, for two vectors $\mathbf{a}$ and $\bv$ of same dimension $K$, we denote $\mathbf{a}^{\bv}\defeq \prod_{j=1}^K a_j^{b_j}$.

The celebrated idea of universal gambling from information theory is essentially to utilize a continuous mixture of the wealth processes of constant bettors.
Consider the cumulative wealth attained by a constant bettor $\bv$ from \hrk{}$(\ov)$:
\begin{align*}
{\wealth^{\bv}(\zv^t;\ov)}
&\defeq \prod_{i=1}^t \prod_{j=1}^K (o_jb_{j})^{\ind{z_{ti}=1}}
=(\ov\odot\bv)^{\kv(\zv^t)}.
\end{align*}
Here, we define the cumulative count vector as
\[
\kv(\zv^t)\defeq 
\Bigl(\sum_{i=1}^t \ind{z_{ij}=1}\Bigr)_{j\in[K]}
=\sum_{i=1}^t \zv_i.
\]
While any mixture of this wealth process induces a valid wealth process,
we specifically consider the so-called Krichevsky--Trofimov (KT) strategy~\citep{Krichevsky--Trofimov1981},
which takes a continuous mixture of the wealth of constant bettors with respect to the the Dirichlet distribution $\Dir(\bv;\av)\defeq \frac{1}{B(\av)}\prod_{i=1}^K b_i^{\a_i-1}$. Here, $B(\av)\defeq (\prod_{i=1}^K\Gamma(\a_i))/\Gamma(\sum_{i=1}^K\a_i)$ denotes the multivariate beta function and $\Gamma(\a)$ denotes the gamma function. 
That is, 
\begin{align*}
{\wealth^{\kt}(\zv^t;\mv^{-1})}
&\defeq 
\int_{\Delta^{K-1}} {\wealth^{\bv}(\zv^t;\mv^{-1})} \Dir(\bv;\av)\diff \bv\\
&=
\mv^{-\kv(\zv^t)}
\frac{B(\kv(\zv^t)+\av)}{B(\av)}.
\numberthis
\label{eq:kt_wealth}
\end{align*}
The KT mixture specifically sets $\av=\frac{\ones_K}{2}$.
This specific choice is the Jeffreys prior for multinomial distributions~\citep{Yang--Berger1996}. 
\citet{Krichevsky--Trofimov1981} proved, in the equivalent language of universal coding, that the KT mixture wealth can achieve the optimal wealth achieved by the best constant bettor in hindsight with a minimax optimal rate in the regret in log wealth ratios under a stochastic market assumption.
It is also later shown that the optimality even holds for individual sequences~\citep{Xie--Barron2000}.

This wealth process is easy to compute for any given $\mv$, since it is only a function of the count vector $\kv(\zv^t)$, which can be viewed as the sufficient statistics.
We can further manipulate this expression to better understand the behavior of this wealth process.
First, we define the empirical mean vector $\hat{\muv}_t\defeq \frac{1}{t}\sum_{i=1}^t \zv_i = \frac{\kv(\zv^t)}{t}$. 
Second, we define $q^{\kt}(\xv)\defeq \frac{B(\xv+\av)}{B(\av)}$ for $\xv\in\Real_{\ge 0}^K$. 
Then, the corresponding confidence set $\Cc^{\kt}(\zv^t;\d)$ can be written as
\begin{align*}
&\Cc^{\kt}(\zv^t;\d)
\\&
=
\Bigl\{\mv\suchthat%
\hat{\muv}_t^\intercal\log\frac{1}{\mv} 
<
\frac{1}{t}\log\frac{1}{\d} + \frac{1}{t}\log\frac{1}{q^{\kt}(t\hat{\muv}_t)}\Bigr\}\\
&=
\Bigl\{
\mv\suchthat%
D(\hat{\muv}_t~\|~\mv) < \frac{1}{t}\log\frac{1}{\d} + \frac{1}{t}\log\frac{e^{-t H(\hat{\muv}_t})}{q^{\kt}(t\hat{\muv}_t)}\Bigr\},
\end{align*}
where $D(\pv~\|~\qv)\defeq \pv^\intercal\log\frac{\pv}{\qv}$ denotes the Kullback--Leibler divergence between $\pv,\qv\in\Delta^{K-1}$ and $H(\pv)\defeq -\pv^\intercal\log \pv$ denotes the Shannon entropy for a probability vector $\pv\in\Delta^{K-1}$.
The following properties are now immediate.
\begin{theorem}[KT CS]
\label{thm:kt}
Pick any $\d\in(0,1)$.
\begin{enumerate}[label={(\alph*)}]
\item (Convexity) $\Cc^{\kt}(\zv^t;\d)\subset\Delta^{K-1}$ is always convex.
\item (Always-non-vacuous guarantee) If $\mv\in\Cc^{\kt}(\zv^t;\d)$, 
\[
m_j > \frac{k_j(\zv^t)}{t}(\d q^{\kt}(\kv(\zv^t)))^{\frac{1}{t}}
\]
for any $j\ge 1$. 
In particular, for any $j\in[K]$, $\Cc^{\kt}(\zv^t;\d)$ is strictly bounded away from the hyperplane $\{\mv\suchthat m_j=0\}$ as soon as $\min_{i\in[t]} z_{ij}=1$.
\item (Large-sample behavior) 
For $t$ sufficiently large,
$\Cc^{\kt}(\Zv^t;\d)$ asymptotically behaves as
\begin{align*}
\Bigl\{\mv\suchthat D(\hat{\muv}_t~\|~\mv) < \frac{1}{t}\log\frac{1}{\d} + \frac{K-1}{2t} \log t+o(1) \Bigr\},
\end{align*}
where $o(1)$ is a vanishing term as $t\to\infty$.
\end{enumerate}
\end{theorem}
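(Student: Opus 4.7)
The plan is to first rewrite the KT confidence sequence in a transparent closed form by taking logarithms in the wealth bound $\wealth^{\kt}(\zv^t;\mv^{-1}) < 1/\delta$. Using \eqref{eq:kt_wealth}, this is equivalent to
\[
\prod_{l=1}^K m_l^{k_l} > \delta\, q^{\kt}(\kv),
\]
or, dividing by $t$ and rewriting in terms of $\hat{\muv}_t = \kv/t$, to the Kullback--Leibler form
\[
D(\hat{\muv}_t \,\|\, \mv) + H(\hat{\muv}_t) < \tfrac{1}{t}\log\tfrac{1}{\delta\, q^{\kt}(\kv)},
\]
where $\kv = \kv(\zv^t)$. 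Both forms will be convenient below.

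Part (a) is then immediate: $\mv \mapsto \sum_l k_l \log(1/m_l)$ is a nonnegative linear combination of the convex functions $-\log m_l$, hence convex on the simplex, so $\Cc^{\kt}(\zv^t;\delta)$ is a sub-level set of a convex function intersected with $\Delta^{K-1}$, and therefore convex.

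For part (b), I would start from $\prod_l m_l^{k_l} > \delta\, q^{\kt}(\kv)$ and isolate a lower bound on each $m_j$ by combining the elementary fact $m_l \le 1$ with a weighted AM-GM-style step applied to the remaining coordinates, so that a factor of $\hat{\mu}_{t,j} = k_j/t$ appears from the weights. The ``strict separation'' consequence when $\min_i z_{il} = 1$ is then a clean corollary: in that case $k_j = t$ for the unique observed category $j$, the wealth collapses to $m_j^{-t}\, q^{\kt}(t\ev_j)$, and $\wealth^{\kt} < 1/\delta$ immediately yields $m_j > (\delta\, q^{\kt}(t\ev_j))^{1/t} > 0$, showing strict separation from the face $\{m_j = 0\}$.

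For part (c), the main analytic input is an asymptotic expansion of $q^{\kt}(\kv) = B(\kv + \tfrac{1}{2}\ones_K)/B(\tfrac{1}{2}\ones_K)$. Applying Stirling's formula to each gamma factor (or invoking the classical KT redundancy bound from universal coding) gives
\[
-\log q^{\kt}(\kv) = t\, H(\hat{\muv}_t) + \tfrac{K-1}{2}\log t + O(1) \qquad \text{as } t \to \infty,
\]
and substituting into the KL form of the CS produces the stated large-sample shape, with the $o(1)$ remainder absorbing the Stirling $O(1)$ divided by $t$. I expect the main obstacle to be here: the Stirling expansion itself is classical, but it must be handled with some uniformity in $\hat{\muv}_t$, particularly near the simplex boundary where some coordinates $k_l$ may remain small even as $t$ grows. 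Parts (a) and (b) reduce to standard convex-analytic manipulations and elementary inequalities, so the technical weight of the proof sits in (c).
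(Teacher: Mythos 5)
Parts (a) and (c) of your proposal coincide with the paper's argument: convexity comes from the convexity of $\mv\mapsto\kv^\intercal\log\mv^{-1}$ on the simplex, and the large-sample shape comes from Stirling's expansion of $q^{\kt}(t\hat{\muv}_t)$, which is all the paper does for (c) as well (it is equally silent on the uniformity issue near the simplex boundary that you flag).

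The genuine gap is in (b), at exactly the step you wave at: ``a weighted AM-GM-style step applied to the remaining coordinates, so that a factor of $k_j/t$ appears from the weights'' is the entire content of the claim, and you never carry it out. The route you do make concrete --- using $m_l\le 1$ to discard the other factors in $\prod_l m_l^{k_l}>\d\,q^{\kt}(\kv)$ --- yields $m_j>(\d\,q^{\kt}(\kv))^{1/k_j}$, which has exponent $1/k_j$ rather than $1/t$ and no $k_j/t$ prefactor; it is not the displayed inequality, though it does suffice for the qualitative ``bounded away from $\{\mv\suchthat m_j=0\}$'' conclusion whenever $k_j\ge 1$. The paper's own proof instead asserts, ``by Jensen's inequality,'' the containment
\[
\Cc^{\kt}(\zv^t;\d)\subset\Bigl\{\mv\in\Delta^{K-1}\suchthat \hat{\muv}_t^\intercal\mv^{-1}<(\d\,q^{\kt}(\kv(\zv^t)))^{-1/t}\Bigr\},
\]
and then uses $\hat{\muv}_t^\intercal\mv^{-1}\ge\frac{k_j}{t}\frac{1}{m_j}$. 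You should be aware that this containment is itself delicate: weighted AM--GM gives $\prod_l(1/m_l)^{k_l/t}\le\hat{\muv}_t^\intercal\mv^{-1}$, i.e., the inequality points the wrong way for step $(i)$, and a direct check with $K=2$, $t=2$, $\kv=(1,1)$, $\d=0.05$ (where the CS is $\{m_1(1-m_1)>\d/8\}$, which contains $m_1=0.01$, while the claimed bound requires $m_1>\tfrac12(\d/8)^{1/2}\approx 0.04$) shows the displayed constant cannot be recovered. So do not expect your sketched step to close; prove the $1/k_j$ version instead, which is what the ``in particular'' clause needs. Relatedly, you read the trigger condition $\min_{i\in[t]}z_{ij}=1$ literally (all $t$ observations equal $\ev_j$); comparison with Theorem~\ref{thm:up}(b), whose condition is $\max_{i\in[t]}y_{ij}>0$, indicates the intended hypothesis is merely $k_j\ge1$, which the $1/k_j$ bound covers.
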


\begin{figure*}
\centering
\includegraphics[width=\textwidth]{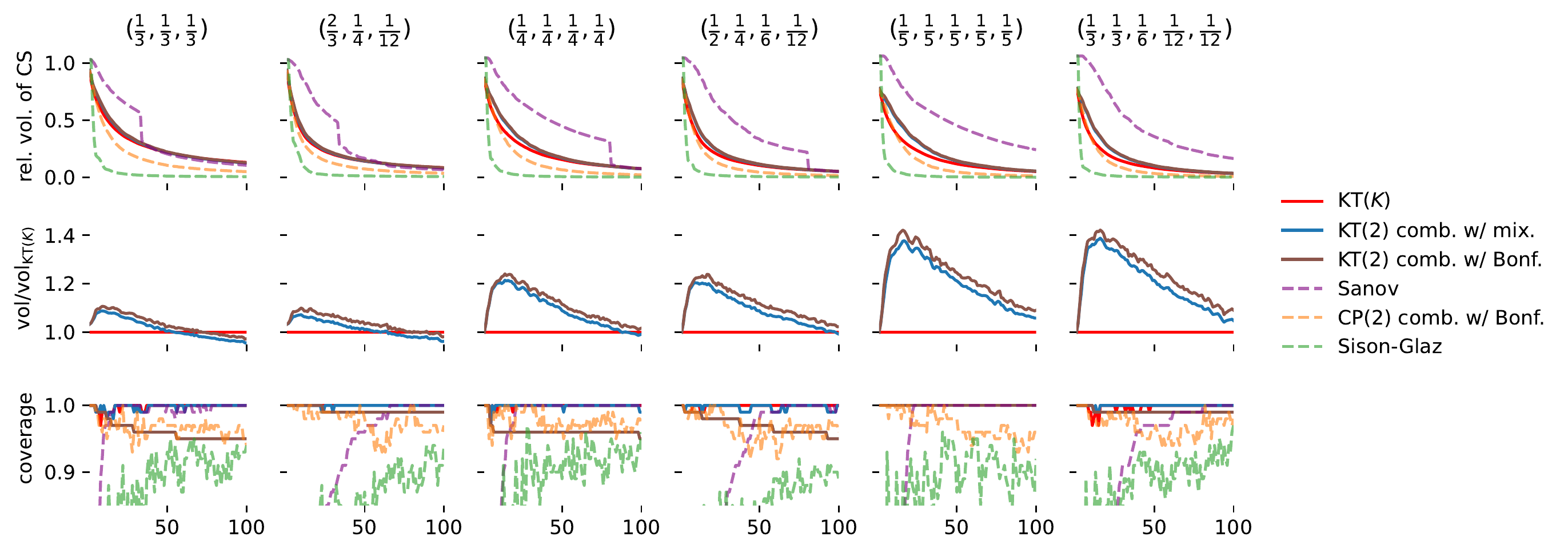}
\caption{Performance of time-uniform confidence sets (solid lines) and non-time-uniform confidence sets (dashed lines) with respect to \iid categorical data with mean vector $\muv\in\Simplex^{K-1}$ (indicated as the titles), in terms of the relative volume of the confidence sets (first row) and the per-time-step coverage (third row). The second row visualizes how the baselines $\mathsf{KT(2)~w/~mix.}$ and $\mathsf{KT(2)~w/~Bonf.}$ are compared to $\mathsf{KT}(K)$.}
\label{fig:ktcs}
\end{figure*}

\subsubsection{Baselines}
Instead of the multi-horse-race approach,
one can construct a confidence set for categorical data by properly \emph{aggregating} confidence intervals constructed for each coordinate, \eg using the existing gambling techniques for scalar-valued processes.
Below, we empirically demonstrate the benefit of the proposed approach compared to such baselines.

More concretely, for each coordinate $j\in[K]$, we can construct a confidence set with level $1-\d'$ for $\mu_j$, say $\Cc_{\delta'}^{(j)}$, based on a binary sequence $(\ind{z_{ij}=1})_{i=1}^t$.
To construct a single confidence set for $\muv$ from such $K$ confidence sets,
one straightforward choice is the well-known \emph{Bonferroni correction}, which takes the intersection of $K$ CSs constructed with level $1-\d'$ with $\d'=\frac{\d}{K}$ for each coordinate: 
\[
\Cc_\delta^{\sf{bonf}}\defeq \bigcap_{j=1}^K \Cc_{\frac{\delta}{K}}^{(j)}.
\]
By a union bound, it readily follows that the resulting CS is a CS with level $1-\d$.

For gambling-based (or martingale-based) CSs, there exists yet another natural way to aggregate CSs from different wealth sequences. Suppose that we are given $K$ wealth sequences $(\wealth_t^{(j)})_{t=0}^\infty$ for $j\in[K]$, each of which induces a level-$(1-\d')$ CS as
\[
\Cc_{\delta'}^{(j)}\defeq \Bigl\{\mv\suchthat \wealth_t^{(j)}(\mv)\le \frac{1}{\delta'}\Bigr\}.
\]
We can then aggregate the $K$ sets by a \emph{mixture wealth}, \ie
\[
\Cc_\delta^{\sf{mix}}\defeq \Bigl\{\mv\suchthat \frac{1}{K}\sum_{j=1}^K \wealth_t^{(j)}(\mv)\le \frac{1}{\delta}\Bigr\}.
\]
Then, it is straightforward that the Bonferroni correction is provably not smaller than the mixture-based aggregation:
\begin{proposition}
$\Cc_\delta^{\sf{mix}}\subseteq \Cc_\delta^{\sf{bonf}}$.
\end{proposition}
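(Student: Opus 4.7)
The plan is to unwind both definitions and exploit the nonnegativity of wealth processes, which is implicit since these are gambler's wealths starting from \$1 with subfair odds. Concretely, suppose $\mv\in\Cc_\delta^{\sf{mix}}$, so that
\[
\frac{1}{K}\sum_{j=1}^K \wealth_t^{(j)}(\mv)\le \frac{1}{\delta},
\quad\text{i.e.,}\quad
\sum_{j=1}^K \wealth_t^{(j)}(\mv)\le \frac{K}{\delta}.
\]
Since each $\wealth_t^{(j)}(\mv)\ge 0$, every individual term in the sum is bounded by the whole sum, so $\wealth_t^{(j)}(\mv)\le \frac{K}{\delta}=\frac{1}{\delta/K}$ for every $j\in[K]$. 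By the definition of the per-coordinate CS at level $1-\delta/K$, this is exactly the statement that $\mv\in \Cc_{\delta/K}^{(j)}$ for each $j$, and hence $\mv\in\bigcap_{j=1}^K \Cc_{\delta/K}^{(j)}=\Cc_\delta^{\sf{bonf}}$.

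The argument takes two lines, so there is no real obstacle; the only point worth flagging is the (standing) nonnegativity hypothesis on the wealth sequences, which is needed to pass from the bound on the average to a bound on each summand. Since the statement is framed in terms of gambling-based CSs built from Ville's inequality (Theorem~\ref{thm:ville}), nonnegativity is built into the framework and need not be assumed separately.

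In the write-up I would simply present the two displayed inequalities above followed by the single sentence invoking nonnegativity, and note in passing that the inclusion can be strict: the mixture CS collapses a coordinate only when the \emph{average} wealth blows up, whereas the Bonferroni CS collapses it as soon as \emph{one} coordinate's wealth exceeds $K/\delta$, so mixing tends to be strictly tighter whenever the $\wealth_t^{(j)}(\mv)$'s are of comparable magnitude. This remark motivates the preference for the mixture aggregation used in the rest of the paper but is not required for the proposition itself.
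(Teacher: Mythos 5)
Your proof is correct and follows exactly the same route as the paper's: bound each $\wealth_t^{(j)}(\mv)$ by the full sum $\sum_{j'}\wealth_t^{(j')}(\mv)\le K/\delta$ using nonnegativity, then identify $K/\delta$ with the threshold $1/(\delta/K)$ of the per-coordinate CSs. Your explicit remark that nonnegativity of the wealth processes is the (implicit) hypothesis doing the work is a fair observation the paper leaves unstated, but the argument is otherwise identical.
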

\begin{proof}
If $\mv\in \Cc_\delta^{\sf{mix}}$, then $\wealth_t^{(j)}(\mv)\le \sum_{j'=1}^K \wealth_t^{(j')}(\mv) \le \frac{K}{\delta}$ for every $j\in[K]$, which implies $\mv\in \Cc_\delta^{\sf{bonf}}$.
\end{proof}
We note that a concurrent work by \citet[Lemma~4]{Cho--Gan--Kallus2024} essentially discusses these two baselines and argues the tightness of the mixture-based aggregation, while we independently rediscovered this during the revision of our manuscript.

\subsubsection{Simulation\footnote{We have open-sourced the implementation of the proposed confidence sequences and provided the codes to reproduce the simulation results online: \url{https://github.com/jongharyu/confidence-sequence-via-gambling}.}}
To demonstrate the performance of the KT CS (denoted as $\mathsf{KT}(K)$, we run simulations with \iid categorical observations of length $T=100$, with different mean vectors of dimension $K\in\{3,4,5\}$; see Fig.~\ref{fig:ktcs}, where the title of each column indicates the underlying mean. Each experiment was run with 100 random realizations and $\d=0.05$ was used. In the first row, we plot the relative volume of the CS over $\Delta^{K-1}$. 
As another CS baseline, we consider applying the mixture aggregation and Bonferroni correction to combine $\mathsf{KT}(2)$'s induced by each coordinate $j\in[K]$ with $\d'=\d/K$; check $\mathsf{KT(2)~w/~mix.}$ and $\mathsf{KT(2)~w/~Bonf.}$ in the figure.
Note that, as argued above, mixture-based aggregations are tighter than Bonferroni, though marginally.
While the difference is not significant in $K=3$, we observe that the multivariate version $\mathsf{KT}(K)$ exhibits much tighter behavior as $K$ increases, as visualized in the second row. Interestingly, however, the baselines seem to outperform in a long run.

Exploring further, we compare the performance of the KT CS to the non-time-uniform confidence intervals (shown in dashed lines with transparency).
We first note that the asymptotic behavior of KT CS resembles
Sanov's theorem~\citep[Thm.~11.4.1]{Cover--Thomas2006}, which can be proved by the method of types: for each $t\ge 1$,
\begin{align}
\label{eq:sanov}
\P\Bigl\{D(\hat{\muv}_t~\|~\muv)
<\frac{1}{t}\log\frac{1}{\d} + \frac{K}{t}\log(t+1)\Bigr\}\ge 1-\d.
\end{align}
Note that it has a worse redundant term than KT CS. \citet{Mardia--Jiao--Tanczos--Nowak--Weissman2020} proposed several improved bounds of a similar flavor; for example, for each $t\ge 8\pi(K/e)^3$, 
\begin{align}
\label{eq:sanov_improved}
\P\Bigl\{D(\hat{\muv}_t~\|~\muv)
<\frac{K-1}{t}\log\frac{2(K-1)}{\d}\Bigr\}\ge 1-\d,
\end{align}
which removes the $O(\frac{1}{t}\log t)$ term.
Note that the required number of samples  for this improved bound to kick in is $t=\Omega(K^3)$ and grows fast in $K$. 
As shown in the simulation above, KT is almost always tighter than Sanov and on par after the improved bound kicks in, while KT CS provides a much stronger time-uniform guarantee.

We also compare with the combined Cloppe--Pearson intervals~\citep{Clopper--Pearson1934} with Bonferroni correction ($\mathsf{CP(2)+Bonf}$) and the Sison--Glaz method~\citep{Sison--Glaz1995}. 
We note that $\mathsf{CP(2)+Bonf}$ provides consistently tighter confidence sets with guaranteed coverage, but without the time-uniform guarantee\footnote{The \emph{coverage} (or \emph{coverage probability}) is the probability that the confidence set contains the target parameter. In the experiments, we compute the empirical coverage at each time instance, \ie non-time-uniform coverage, over the repeated experiments.}. While the Sison--Glaz method outputs extremely small confidence sets, the coverage is much below than the desired threshold, since the method was designed based on the asymptotic approximation.
This demonstrates that the performance of KT CS is even comparable to the existing non-time-uniform bounds.

\subsection{CS for Sampling Without Replacement}
\label{sec:wor}
In this section, we specifically consider the sampling \emph{without replacement} (WoR) setup, where we aim to construct a tight, vanishing confidence sequence to the mean of finite samples with sampling WoR. 
The main result in this section is that if one adapts the KT CS for this setting via a general reduction to be described below, the resulting CS is always tighter than the posterior-prior martingale proposed by \citet{Waudby-Smith--Ramdas2020a}. 

Suppose that there exists a finite set of $N$ multinomial observations $\{\zv_1,\ldots,\zv_N\}\subset\Vc_K$, which are fixed and nonrandom.
The underlying population is now fixed, and the goal here is to estimate the population mean ${\muv}\defeq \frac{1}{N}\sum_{i=1}^N \zv_i$ by sequentially drawing an observation from the population uniformly at random WoR. 
Formally, the probabilistic model for the sampling WoR procedure is 
\[
\Zv_t | \Zv^{t-1}\sim \Unif(\{\zv_1,\ldots, \zv_N\}\backslash \{\Zv_1,\ldots,\Zv_{t-1}\}).
\]
We note that this setup has important practical applications; for example, a tight WoR confidence sequence can lead to a fast decision in the real-world auditing for votes~\citep{Waudby-Smith--Stark--Ramdas2021}.
\newcommand{\wor}{\mathsf{WoR}}
As argued in \citep{Waudby-Smith--Ramdas2020b}, there exists a general reduction that maps any CS for bounded random processes to CS for sampling WoR.
For each $t\ge 1$, after having observed $\Zv^{t-1}$, the mean of the remaining samples is 
\[
\muv_{N,t}^{\wor}(\Zv^{t-1})
\defeq \E[\Zv_{t}|\Fc^{t-1}]
=\frac{N\muv-\sum_{i=1}^{t-1} \Zv_i}{N-(t-1)}.
\]
For any $\mv\in\Delta^{K-1}$, we define an analogous transform
\[
\mv_{N,t}^{\wor}(\Zv^{t-1})
\defeq\frac{N\mv-\sum_{i=1}^{t-1} \Zv_i}{N-(t-1)}.
\]
To testing if the underlying population mean is $\mv$, 
we can plug-in $\mv_{N,t+1}^{\wor}(\Zv^{t})$ in place of $\mv$ to the definition of any valid confidence sequence $(\Cc(\zv^t;\d))_{t=1}^\infty$ for bounded vectors; we defer the statement Proposition~\ref{prop:wor} to Appendix~\ref{app:sec:wor}.

Concretely, 
we can adapt the KT CS to this setup by only plugging-in $\mv_{N,t+1}^{\wor}(\Zv^t)$ in place of $\mv$ in the definition of the wealth process~\eqref{eq:kt_wealth}, which leads to
\begin{align*}
\wealth_{\wor}^\kt(\Zv^t;\mv)
&\defeq \wealth^\kt(\Zv^t;\mv_{N,t+1}^{\wor}(\Zv^t))\\
&= \frac{B(\Kv_t+\av)}{B(\av)} \Bigl(\frac{N\mv-\Kv_t}{N-t}\Bigr)^{-\Kv_t},
\end{align*}
where we define $\Kv_t\defeq \kv(\Zv^t)=\sum_{i=1}^t \Zv_i$ as a shorthand.
Here, we allow $\av\in\Real_{>0}^K$ to be a hyperparameter.

\citet{Waudby-Smith--Ramdas2020a} proposed a CS for categorical observations based on the process
\[
\mathsf{R}_{\wor}(\Zv^t;\mv)\defeq 
\frac{B(\Kv_t+\alphav)}{B(\alphav)} \frac{\binom{N}{N\mv}}{\binom{N-t}{N\mv-\Kv_t}}
\numberthis\label{eq:ppr}
\]
$\mv\in\Gc_{K,N}$, \ie $\mv\in \Integer_{\ge 0}^K$ and $\mv / N \in \Delta^{K-1}$,
where $\av\in\Real_{>0}^K$ is again a hyperparameter.
This sequence is derived as a posterior-prior ratio (PPR) under a hypergeometric observation model, and thus we refer to the resulting CS as PPR CS.
Interestingly, KT CS is always tighter than or equal to PPR CS with probability 1 as implied by the following theorem. 
Recall that a CS becomes tighter as the underlying sequence diverges faster.
\begin{theorem}
\label{thm:guarantee}
Pick any $0\le t\le N$ and $\tilde{\zv}^t\in\Vc_K^t$.
For any $\mv\in \Gc_{K,N}$, we have
\[
\wealth_{\wor}^\kt(\tilde{\zv}^t;\mv)\ge \mathsf{R}_{\wor}(\tilde{\zv}^t;\mv).
\]
\end{theorem}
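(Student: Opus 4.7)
The plan is to reduce the claim to a clean combinatorial inequality in falling factorials, and then prove that inequality by a term-by-term comparison after expanding both sides via the Chu--Vandermonde identity and the multinomial theorem.

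First I would cancel the common prefactor $B(\Kv_t+\av)/B(\av)$ from both sides. Using the standard falling-factorial identity
\[
\frac{\binom{N}{N\mv}}{\binom{N-t}{N\mv-\Kv_t}} = \frac{N^{\underline{t}}}{\prod_j (N\mu_j)^{\underline{K_{tj}}}},
\]
where $x^{\underline{k}}\defeq x(x-1)\cdots(x-k+1)$, and writing $n_j\defeq N\mu_j$ and $k_j\defeq K_{tj}$ for brevity (nonnegative integers with $\sum_j n_j = N$ and $\sum_j k_j = t$; the case $k_j > n_j$ makes both quantities diverge, so WLOG $n_j \ge k_j$), the claim reduces to
\[
\Bigl(\prod_j n_j^{\underline{k_j}}\Bigr)(N-t)^t \;\ge\; N^{\underline{t}}\prod_j (n_j-k_j)^{k_j}.
\]

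Next I would expand both sides in a parallel fashion. The $K$-dimensional Chu--Vandermonde identity gives $N^{\underline{t}} = \sum_{\kv'\in\Gc_{K,t}}\binom{t}{\kv'}\prod_j n_j^{\underline{k_j'}}$ (equivalently, counting ordered without-replacement $t$-samples by their color-count vector), and the multinomial theorem applied to $N-t = \sum_j (n_j - k_j)$ gives $(N-t)^t = \sum_{\kv'\in\Gc_{K,t}}\binom{t}{\kv'}\prod_j (n_j-k_j)^{k_j'}$. Substituting these expansions and matching the $\binom{t}{\kv'}$-weighted terms, it then suffices to show, for every $\kv'\in\Gc_{K,t}$,
\[
\prod_j n_j^{\underline{k_j}}(n_j-k_j)^{k_j'} \;\ge\; \prod_j n_j^{\underline{k_j'}}(n_j-k_j)^{k_j}.
\]

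To close the argument, define $g_j(k)\defeq n_j^{\underline{k}}/(n_j-k_j)^k$ so that the display above becomes $\prod_j g_j(k_j)\ge\prod_j g_j(k_j')$. A direct ratio computation gives $g_j(k+1)/g_j(k) = (n_j-k)/(n_j-k_j)$, which is $>1$ for $k<k_j$, equals $1$ at $k=k_j$, and is $<1$ for $k>k_j$. Hence each $g_j$ attains its maximum at $k=k_j$ (tied with $k=k_j+1$), so $g_j(k_j)\ge g_j(k_j')$ for every $j$ and every $k_j'\ge 0$; multiplying over $j$ concludes the proof.

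The main obstacle I anticipate is identifying the right reduction: naive bounds such as $n_j^{\underline{k_j}}\ge (n_j-k_j)^{k_j}$ paired with $N^{\underline{t}}\le N^t$ point in the wrong direction, and a direct inductive reduction from $K$ to $K=2$ (by merging colors) is possible but more technical. The cleanness of the strategy above relies on the fact that Chu--Vandermonde and the multinomial theorem produce sums over the same index set $\Gc_{K,t}$, so the diagonal term $\kv'=\kv$ matches on both sides, and every off-diagonal term favors the LHS purely by the unimodality of $g_j$ around $k_j$.
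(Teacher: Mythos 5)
Your proof is correct, and it takes a genuinely different route from the paper's. Both arguments begin identically — cancel the common Dirichlet factor $B(\Kv_t+\av)/B(\av)$ and reduce to the inequality $\bigl(\prod_j n_j^{\underline{k_j}}\bigr)(N-t)^t \ge N^{\underline{t}}\prod_j (n_j-k_j)^{k_j}$, which the paper states as a multinomial-coefficient inequality $\binom{K}{\kb}\ge\binom{K+M}{\kb+\mv}(\kb/K)^{\mv}$. From there the paper proves a general lemma by first reducing to two blocks ($d=2$) and then recasting the two-block inequality probabilistically as $\P(Z=k_1)\ge\P(Z+W=k_1+m_1)$ for independent binomials $Z\sim\Binom(k_1+k_2,\tfrac{k_1}{k_1+k_2})$ and $W\sim\Binom(m_1+m_2,\tfrac{k_1}{k_1+k_2})$, which follows because $k_1$ is the mode of $Z$. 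You instead stay in dimension $K$ throughout: you expand $N^{\underline{t}}$ and $(N-t)^t$ over the \emph{same} index set $\Gc_{K,t}$ (Chu--Vandermonde and the multinomial theorem), observe the diagonal term $\kv'=\kv$ matches exactly, and settle every off-diagonal term by the unimodality of $g_j(k)=n_j^{\underline{k}}/(n_j-k_j)^k$ around $k=k_j$. The two proofs are cousins at their core — your ratio computation $g_j(k+1)/g_j(k)=(n_j-k)/(n_j-k_j)$ is the same calculation that makes $k_1$ the binomial mode — but yours avoids both the reduction to $d=2$ and the probabilistic reformulation, and is arguably more transparent about \emph{why} the inequality holds (the observed count vector $\kv$ is the maximizer of the termwise ratio). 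One small caveat: dividing by $(n_j-k_j)^{k}$ to define $g_j$ silently assumes $n_j>k_j$; when $n_j=k_j$ with $k_j\ge1$ you should argue in the pre-division form, where the corresponding right-hand term $\prod_j n_j^{\underline{k'_j}}(n_j-k_j)^{k_j}$ vanishes and the termwise inequality is trivial (the paper's lemma likewise restricts to $k_i\ge1$, i.e.\ $n_j-k_j\ge1$, so this degenerate case is glossed over there too).
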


\subsubsection{Simulation}
We demonstrate the tightness of the (WoR-) KT CS compared to the PPR CS in Fig.~\ref{fig:wor}. We consider a finite population of 1000 balls consisting of 600 red, 250 green, and 150 blue balls, and report the result averaged over 1000 random permutations. While both KT and PPR CSs naturally converge to the singleton containing the true population after observing all balls, KT can be relatively much tighter than PPR, as shown in the second panel. If one wishes to decide the rank of the colors by acquiring balls sequentially, the stopping time, which is defined as the first time when all colors can be distinguished from all the others with certainty, is the practical measure for the quality of CS. In the third panel, we visualize the histogram of the ratio of the stopping times between KT and PPR; on average, KT CS can decide the rank about 26\% faster than PPR CS.

\begin{figure*}
\centering
\includegraphics[width=\textwidth]{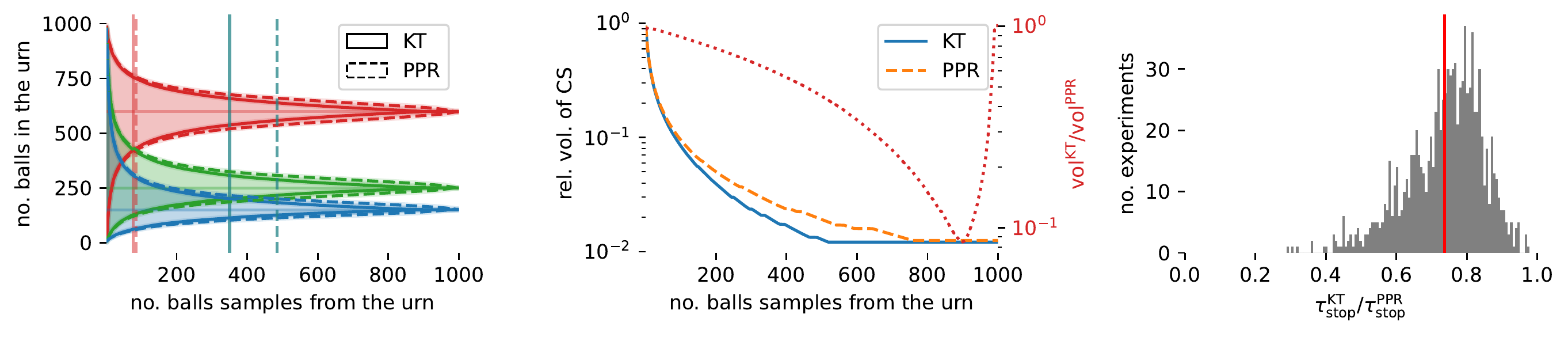}\vspace{-.5em}
\caption{Simulation of KT and PPR CSs for sampling without replacement (WoR). The underlying population consists of 1000 balls (600 red , 250 green, and 150 blue). 
The plotted results are averaged over random experiments with 1000 random permutations. 
The vertical lines in the first panel visualize the stopping time for each color when the confidence interval for the number of the balls of the particular color becomes disjoint from all the other confidence intervals.
Note, however, that the curves shown here are averaged over random trials and the stopping times indicated here are only for an illustrative purpose.
The second panel visualizes the relative volume of the CSs at each time step.
The red, dotted line in the second panel shows the ratio of the volume of KT CS to that of PPR CS.
The last panel presents a histogram of the ratios of the final stopping time of KT to PPR.}
\label{fig:wor}
\end{figure*}

\section{{Probability-Vector}-Valued Observations}
\label{sec:probability}
We now assume that $(\Yv_t)_{t=1}^{\infty}$ is a $\Delta^{K-1}$-valued stochastic process such that $\E[\Yv_t|\Fc_{t-1}]\equiv\muv$ for some $\muv\in\Delta^{K-1}$.
The development is in parallel Section~\ref{sec:categorical} for categorical observations. 
We only need to replace \hrk{}$(\mv^{-1})$ with \chrk{}$(\mv^{-1})$, where the odds vector $\Xv_t
\defeq {\mv^{-1}}\odot {\Yv_t}$ now can take any value in the convex hull of $\{\frac{\ev_1}{m_1},\ldots,\frac{\ev_K}{m_K}\}$.
With a slight abuse of notation, we also denote the cumulative wealth from the horse race \chrk{}$(\ov)$ by $\wealth(\Yv^t;\ov)$. 
It then readily follows that the same definition in \eqref{eq:kt_cs} is a $(1-\d)$-CS as shown in Theorem~\ref{thm:hr}:
\begin{theorem}
\label{thm:chr}
For any causal gambling strategy, $(\Cc(\Yv^t;\d))_{t=1}^{\infty}$ is a CS with level $1-\d$.
\end{theorem}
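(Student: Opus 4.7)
The plan is to mirror the argument sketched for Theorem \ref{thm:hr} almost verbatim, since the only substantive change is that the odds vectors $\Xv_t = \mv^{-1}\odot\Yv_t$ now range over the convex hull of $\{\ev_1/m_1,\ldots,\ev_K/m_K\}$ rather than just those $K$ vertices. Crucially, this extended range still lies in $\Real_{\ge 0}^K$, so the per-round gain $\bb_t^\intercal\Xv_t$ is nonnegative and the wealth process is well-defined for any causal bet $(\bb_t)\subset\Delta^{K-1}$.

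First, I would verify the martingale property at the true parameter $\mv=\muv$. Since $\E[\Yv_t\mid\Fc_{t-1}]=\muv$ by assumption, linearity of conditional expectation gives $\E[\Xv_t\mid\Fc_{t-1}] = \muv^{-1}\odot\muv = \ones_K$ when $\mv=\muv$. Because the bet $\bb_t$ is $\Fc_{t-1}$-measurable (the strategy is causal), it can be pulled outside the conditional expectation, yielding $\E[\bb_t^\intercal\Xv_t\mid\Fc_{t-1}] = \bb_t^\intercal\ones_K = 1$. Multiplying through by the $\Fc_{t-1}$-measurable quantity $\wealth(\Yv^{t-1};\muv^{-1})$ shows that $(\wealth(\Yv^t;\muv^{-1}))_{t=0}^\infty$ is a nonnegative martingale with $\wealth_0=1$.

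Second, I would invoke Ville's inequality (Theorem \ref{thm:ville}) on this martingale to obtain
\[
\P\Bigl\{\sup_{t\ge 1}\wealth(\Yv^t;\muv^{-1})\ge \tfrac{1}{\d}\Bigr\}\le\d.
\]
The complementary event $\{\sup_{t\ge 1}\wealth(\Yv^t;\muv^{-1})<1/\d\}$ is, by the very definition of $\Cc(\Yv^t;\d)$ in \eqref{eq:kt_cs} (with $\Zv$ replaced by $\Yv$), exactly $\{\muv\in\Cc(\Yv^t;\d)\text{ for all }t\ge 1\}$, which therefore has probability at least $1-\d$. This is precisely the level-$(1-\d)$ CS property we wanted.

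There is no genuine obstacle here: the result is a direct transfer of the categorical case, so the proof is essentially a one-paragraph verification. The only mild bookkeeping is to note that the bet $\bb_t$ depends on the candidate $\mv$, but this dependence is benign because for each fixed $\mv$ the strategy remains causal; the Ville bound is then applied pointwise in $\mv$, and only the $\mv=\muv$ instance is needed to certify coverage of the true mean. No boundedness of odds or concentration inequality is required beyond $\Yv_t\in\Delta^{K-1}$ and the conditional mean hypothesis.
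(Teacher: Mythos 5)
Your proof is correct and follows the same route the paper takes (implicitly, via the general recipe of Section~2 and the argument for Theorem~\ref{thm:hr}): the odds $\Xv_t=\muv^{-1}\odot\Yv_t$ are fair, so for any predictable $\bb_t\in\Delta^{K-1}$ the nonnegative wealth process is a martingale at $\mv=\muv$, and Ville's inequality yields the time-uniform coverage. Nothing is missing.
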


\subsection{Universal-Portfolio CS}
As previously for categorical observations and the HR-based CS construction, any causal gambling strategy leads to a valid CS.
In the previous section, we demonstrated some nice properties of the KT CS, which is induced by a universal gambling strategy based on the KT mixture.
Analogously, we will also explore the idea of universal gambling for probability-vector-valued observations. 
The idea of universal gambling in this continuous case is same: aiming to achieve the best wealth attained by any constant bettor, consider a mixture of the wealth of all constant bettors.
If we consider the KT mixture with $\Dir(\bv;\av)$ with $\av=\frac{\ones_K}{2}$, the universal gambling strategy is Cover's universal portfolio (UP) algorithm~\citep{Cover1991,Cover--Ordentlich1996}.
For the scalar case $(K=2)$, constructing a CS using UP was studied and its excellent empirical performance especially in a small-sample regime was demonstrated by \citep{Orabona--Jun2021,Ryu--Bhatt2022}. Here, we demonstrate a similar property in the multidimensional case.

\begin{figure*}[t]
    \centering
    \includegraphics[width=\textwidth]{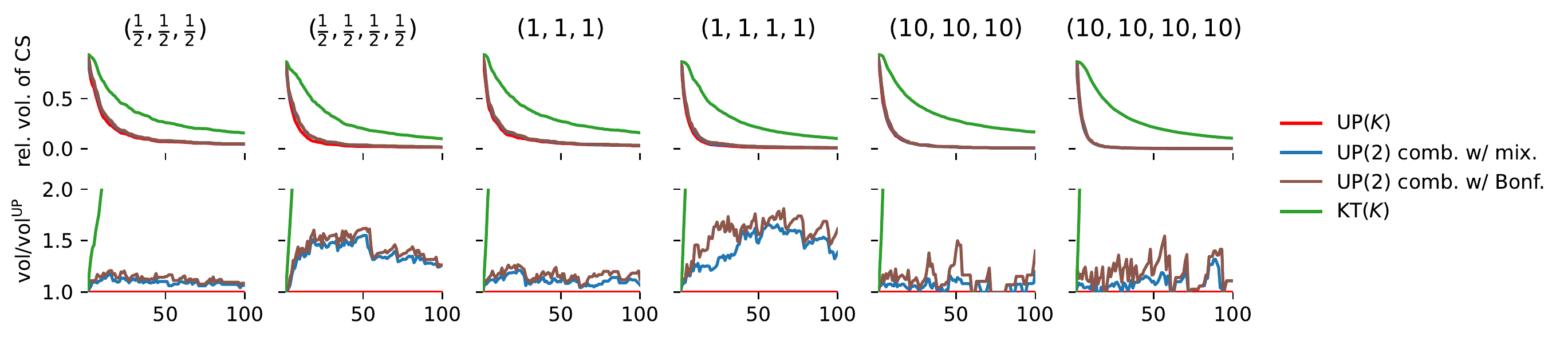}
    \caption{A similar experiment to Fig.~\ref{fig:ktcs} for \iid Dirichlet observations for $K\in\{3,4\}$.}
    \label{fig:upcs}
\end{figure*}

While the conceptual development is exactly the same, the difference arises in the definition of the wealth.
We start by considering the cumulative wealth attained by a constant bettor $\bv$ from \chrk{}$(\ov)$ for a $\Simplex_K$-valued sequence $\yv^t$:
\begin{align*}
{\wealth^{\bv}(\yv^t;\ov)}
&\defeq \prod_{i=1}^t \sum_{j=1}^K o_jb_{j}y_{tj}.
\numberthis\label{eq:wealth_const_betting}\\
&\stackrel{(\star)}{=} \sum_{\eta^t\in[K]^t} \prod_{i=1}^t (o_{\eta_i} b_{\eta_i}) y_{i\eta_i}\\
&= \sum_{\zv^t\in\Vc_K^t} \prod_{i=1}^t (\ov\odot\bv\odot \yv_i)^{\zv_i}
\\
&= \sum_{\kv\in\Gc_{K,t}} \yv^t[\kv] (\ov\odot\bv)^{\kv}.
\numberthis\label{eq:wealth_const_betting_chr}
\end{align*}
Here, noting that the wealth~\eqref{eq:wealth_const_betting} is a product of summations, we rewrite it as a summation of products by applying the distributive law in $(\star)$. We then further simplify the expression with the notations $\Vc_K$ and $\Gc_{K,t}$ defined in Table~\ref{tab:notation}. 
For $\kv\in\Gc_{K,t}$, we also define in \eqref{eq:wealth_const_betting_chr}
\[
\yv^t[\kv]\defeq 
\sum_{\zv^t\in\Vc_K^t\suchthat \kv(\zv^t)=\kv}
\prod_{i=1}^t \yv_i^{\zv_i}.
\]
Note that $\zv_i$'s are one-hot vectors and thus $\yv_i^{\zv_i}=\prod_{j=1}^K y_{i,z_{ij}}$ simply reads out the coordinate of $\yv_i$ where $\zv_i$ has 1.
As a sanity check, if $\yv\in\Vc_K$, the wealth simplifies to $\wealth^\bv(\yv^t;\ov)=(\ov\odot \bv)^{\kv(\yv^t)}$ as in \hrk{}$(\ov)$.

Now, if we consider the mixture of the wealth of constant bettors $\bv$ with respect to a Dirichlet distribution $\Dir(\bv;\av)$, the resulting wealth is the wealth achieved by Cover's UP:
\begin{align*}
{\wealth^{\mathsf{UP}}(\yv^t;\ob)}
&\defeq \int_{\Delta^{K-1}} {\wealth_t^\bv(\yv^t;\ov)} \Dir(\bv;\av)\diff\bv\\
&= \sum_{\kv\in\Gc_{K,t}} \yv^t[\kv] \ob^{\kv}
\frac{B(\kv+\av)}{B(\av)},
\end{align*}
since $\int_{\Delta^{K-1}}\bv^{\kv}\Dir(\bv;\av)\diff\bv=B(\kv+\av)/B(\av)$.
Note that the summation is over $\Gc_{K,t}$ which has cardinality $\binom{t+K-1}{t}=O(t^{K-1})$, and thus the summation can be computed in $O(t^{K-1})$ with memory complexity $O(t^{K-1})$ at each round $t$, by maintaining and updating the statistics $\yv^t[\kv]$ for $\kv\in\Gc_{K,t}$ by
\[
\yv^t[\kv]=
\sum_{j=1}^K y_{tj}\yv^{t-1}[\kv-\ev_j]\ind{k_j\ge 1}.
\numberthis
\label{eq:up_recursive_update}
\]
after observing $\yv_t$ at round $t\ge 1$.
We note that this dynamic programming argument was essentially discussed in \citet[Section~VI]{Cover--Ordentlich1996}.

Let $(\Cc^{\up}(\Yv^t;\d))_{t=1}^\infty$ denote the resulting $(1-\d)$-CS from the UP wealth $\wealth^{\up}(\yv^t;\mv^{-1})$, \ie
\[
\Cc^\up(\yv^t;\d)\defeq \Bigl\{
\mv\in\Simplex_K
\suchthat {\wealth(\yv^t;\mv^{-1})} <\frac{1}{\d}
\Bigr\}.
\numberthis\label{eq:up_cs}
\]
We formally state the properties of UP CS as in Theorem~\ref{thm:kt}:
\begin{theorem}[UP CS]
\label{thm:up}
Pick any $\d\in(0,1)$.
\begin{enumerate}[label={(\alph*)}]
\item (Convexity) ${\Cc}_t^{\up}(\yv^t;\d)\subset\Delta^{K-1}$
is always convex.
\item (Always-non-vacuous guarantee) For any $j\in[K]$, $\Cc^{\up}(\yv^t;\d)$ is strictly bounded away from the hyperplane $\{\mv\suchthat m_j=0\}$ as soon as $\max_{i\in[t]}y_{ij}>0$.
\item $\hat{\muv}_t\in {\Cc}_t^{\up}(\Yv^t)$ with probability 1.
\end{enumerate}
\end{theorem}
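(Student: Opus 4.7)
The plan is to verify the three properties using the two complementary forms of the UP wealth: the mixture representation $\wealth^{\up}(\yv^t;\mv^{-1})=\int_{\Delta^{K-1}}\wealth^{\bv}(\yv^t;\mv^{-1})\Dir(\bv;\av)\diff\bv$ with constant-bettor wealth $\wealth^{\bv}(\yv^t;\mv^{-1})=\prod_{i=1}^t\sum_{j=1}^K b_j y_{ij}/m_j$, and the closed-form expansion $\wealth^{\up}(\yv^t;\mv^{-1})=\sum_{\kv\in\Gc_{K,t}}\yv^t[\kv]\,\mv^{-\kv}\,B(\kv+\av)/B(\av)$. For part (a), I would establish that $\wealth^{\up}(\yv^t;\mv^{-1})$ is log-convex in $\mv$, from which convexity of the sublevel set $\Cc^{\up}(\yv^t;\d)$ is immediate. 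Each $1/m_j$ is log-convex in $\mv$ since $-\log m_j$ is convex; positive linear combinations of log-convex functions are log-convex by Hölder's inequality; products of log-convex functions are log-convex; and integration preserves log-convexity (again by Hölder). Applying these closure properties in sequence to $\sum_j b_j y_{ij}/m_j$, to the product over $i$, and finally to the Dirichlet integral yields the claim.

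For part (b), I would exhibit a single term in the $\kv$-expansion that diverges as $m_j\to 0$. Choose $i^{*}$ with $y_{i^{*}j}>0$; since $\yv_i\in\Delta^{K-1}$, each $i\ne i^{*}$ admits an index $j'(i)$ with $y_{i,j'(i)}>0$. Taking $\zv_{i^{*}}=\ev_j$ and $\zv_i=\ev_{j'(i)}$ for $i\ne i^{*}$ produces a sequence in $\Vc_K^t$ whose count vector $\kv^{*}\in\Gc_{K,t}$ has $k_j^{*}\ge 1$ and contributes $y_{i^{*}j}\prod_{i\ne i^{*}}y_{i,j'(i)}>0$ to $\yv^t[\kv^{*}]$. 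Because $\mv\in\Delta^{K-1}$ forces $m_{j'}\le 1$, so that $m_{j'}^{-k_{j'}^{*}}\ge 1$, the corresponding summand dominates $\wealth^{\up}\ge c\cdot m_j^{-k_j^{*}}\to\infty$ as $m_j\to 0$, and hence $\Cc^{\up}$ is strictly bounded away from the hyperplane $\{m_j=0\}$.

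For part (c), the key is a coordinatewise AM-GM identity at $\mv=\hat{\muv}_t$. For any $\bv\in\Delta^{K-1}$, set $x_i(\bv)=\sum_j b_j y_{ij}/\hat{\mu}_{tj}$; then $\frac{1}{t}\sum_{i=1}^t x_i(\bv)=\sum_j (b_j/\hat{\mu}_{tj})\cdot\frac{1}{t}\sum_i y_{ij}=\sum_j (b_j/\hat{\mu}_{tj})\hat{\mu}_{tj}=\sum_j b_j=1$, so by AM-GM $\wealth^{\bv}(\Yv^t;\hat{\muv}_t^{-1})=\prod_i x_i(\bv)\le 1$. Integrating against the Dirichlet prior gives $\wealth^{\up}(\Yv^t;\hat{\muv}_t^{-1})\le 1<1/\d$, placing $\hat{\muv}_t$ in $\Cc^{\up}(\Yv^t;\d)$. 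The ``with probability 1'' qualifier absorbs the measure-zero event that some $\hat{\mu}_{tj}$ vanishes; on such an event, every $\kv$ with $k_j>0$ has $\yv^t[\kv]=0$ by the same argument as in (b), so the wealth extends continuously to $\hat{\muv}_t$ and the bound persists.

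The main obstacle I anticipate is the log-convexity cascade in part~(a), which chains several closure properties that must each be invoked carefully (particularly the Hölder step for positive linear combinations and for integration); once this is in place, parts (b) and (c) reduce to the identification of a single dominant summand and to the AM-GM identity, respectively.
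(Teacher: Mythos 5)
Your proposal is correct. Parts (a) and (b) track the paper closely: the paper also proves (a) via log-convexity of $\mv\mapsto\wealth^{\up}(\yv^t;\mv^{-1})$, working from the closed-form expansion $\sum_{\kv\in\Gc_{K,t}}\yv^t[\kv]\,\mv^{-\kv}B(\kv+\av)/B(\av)$ (each $\mv^{-\kv}$ is log-convex and sums of log-convex functions are log-convex) rather than pushing log-convexity through the mixture integral as you do --- the two cascades are interchangeable and land in the same place. For (b) the paper simply says the argument is analogous to the KT case and omits it; your dominant-summand argument (a single $\kv^*$ with $k^*_j\ge 1$ and $\yv^t[\kv^*]>0$, combined with $m_{j'}\le 1$) is a perfectly sound way to fill that in. The genuine divergence is in (c). The paper reduces (c) to a lemma asserting that no constant bettor profits against the odds $\hat{\muv}_t^{-1}$ for \emph{any} sequence $\yv^t$, and proves that lemma by showing $\bv\mapsto\wealth^{\bv}(\yv^t;\hat{\muv}_t^{-1})$ is log-concave and that its gradient at $\bv=\hat{\muv}_t$ is a constant multiple of $\ones_K$, hence perpendicular to the simplex, so $\bv=\hat{\muv}_t$ is the global maximizer with value $1$. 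Your AM--GM argument establishes the same inequality more directly: the per-round gains $x_i(\bv)$ have arithmetic mean exactly $1$ under odds $\hat{\muv}_t^{-1}$, so their product is at most $1$, and integrating against the Dirichlet prior finishes. This sidesteps the log-concavity and first-order-condition machinery entirely, and it also makes explicit the treatment of the degenerate event $\hat{\mu}_{tj}=0$, which the paper's lemma glosses over; both routes deliver the same deterministic, per-sequence bound $\wealth^{\up}(\Yv^t;\hat{\muv}_t^{-1})\le 1$.
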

\citet{Orabona--Jun2021} established a similar property to (b) for UP CS in the scalar-valued case; (c) is a generalization of \citet[Theorem~4.3(c)]{Ryu--Bhatt2022}.

\subsubsection{Simulation} 
We conducted a simulation similar to Fig.~\ref{fig:ktcs}, but for \iid Dirichlet observations of dimension $K\in\{3,4\}$ with different concentration parameters. 
As baselines, we tested the aggregated $\mathsf{UP}(2)$ CSs with mixture and Bonferroni correction, as well as $\mathsf{KT}(K)$. Compared to $\mathsf{KT}(K)$, $\mathsf{UP}(K)$ is much tighter in general, as the distribution gets more concentrated (small variance). Similar to the observation in Fig.~\ref{fig:ktcs}, the gap between $\mathsf{UP(2)~w/~mix.}$ and $\mathsf{UP}(K)$ increases as the dimension $K$ increases.

\subsection{Reduction for Bounded Vectors}
\label{sec:general}
Finally, the technique developed so far can be applied to construct a CS for $[0,1]^{K-1}$-valued observations for $K\ge 2$ by a simple reduction described below.
Suppose that we observe $\Yv_t=(Y_{t1},\ldots,Y_{t,K-1})\in[0,1]^{K-1}$ for each $t\ge1$, which satisfies $\E[\Yv_t|\Fc_{t-1}]\equiv\muv$ for some $\muv\in[0,1]^{K-1}$.
To construct a confidence sequence for $\muv$, we can consider its probability-vector representation $\tilde{\Yv}_t=(\Yt_{t1},\ldots,\Yt_{tK})\in\Delta^{K}$, which is defined as
as
\begin{align}\label{eq:reduction}
\Yt_{tK}\defeq 1-\frac{1}{K-1}(Y_{t1}+\ldots+Y_{t,K-1})
\end{align}
and $\Yt_{tj}\defeq Y_{tj}/(K-1)$ for $j\in[K-1]$.
Note that $\tilde\Yv_t$ lies on the $K$-dimensional probability simplex $\Delta^{K-1}$ by construction and now its conditional mean is \[
\E[\tilde{\Yv}_t|\Fc_{t-1}]=\begin{bmatrix}
\frac{\muv}{K-1} \\
1-\frac{\ones_{K-1}^\intercal\muv}{K-1}
\end{bmatrix}\in\Delta^{K-1}.
\]
Now we can apply the UP CS on the sequence $(\tilde{\Yv}_t)_{t=1}^\infty$ to construct a confidence sequence for $\muv$.

\section{Concluding Remarks}
\label{sec:conclusion}
In this paper, we presented a general gambling framework for constructing confidence sequences and established some theoretical properties of CSs from universal gambling strategies.
Our simulations show that compared to the baseline CSs constructed with Bonferroni correction and averaged wealth, the multivariate gambling CSs are more conservative (\ie of higher coverage for each time instance), provide tighter confidence sets in the small-sample regime, and the performance margin increases as the dimensionality increases.
Such a tight confidence set construction may lead to faster decisions in sequential decision-making problems, \ie , as we demonstrated in Section~\ref{sec:wor}.
We remark that one can consider applying the KT CS for multi-arm bandits as demonstrated in \citep[Section~4.4]{Malloy--Tripathy--Nowak2020}.

While UP CS provides extremely tight confidence sets with guaranteed coverage even with small samples, the prohibitive computational complexity $O(T^K)$ for $K$-dimensional probability vectors of length $T$ makes UP CS less applicable in practice for high-dimensional data. 
Efficient approximation of UP has been an open problem in information theory and theoretical computer science for decades~\citep{VanErven--VanderHoeven--Kotlowski--Koolen2020}.
For example, \citep{Kalai--Vempala2002} proposed efficient algorithms for approximate computation of UP based on Monte Carlo Markov chain methods, and several developments have been made recently, including \citep{Luo--Wei--Zheng2018, VanErven--VanderHoeven--Kotlowski--Koolen2020, Mhammedi--Rakhlin2022, Zimmert--Agarwal--Kale2022,Jezequel--Ostrovskii--Gaillard2022,Tsai--Lin--Li2024}.
The exploration of efficient approximate UP methods and the computational-statistical trade-off is left for future research. In the context of CSs, finding an efficient yet tight approximation of UP in the small-sample regime would be of great practical importance.

An experienced reader may wonder if the proposed gambling technique is applicable to sequential kernel regression (or kernel bandit)\citep{Vakili--Scarlett--Javidi2021}, or even to simpler linear bandits\citep{Abbasi-Yadkori--Pal--Szepesvari2011}. While this extension is intriguing, nontrivial technicalities arise. First, the proposed methods in this paper only apply to bounded vectors, requiring a suitable problem setting in such an extension. Second, the UP-based confidence sets are implicitly defined, needing a tight analytical outer bound or an efficient numerical procedure for implementation. We leave this direction for future work.

\section*{Acknowledgements}
The authors appreciate insightful feedback and suggestions from anonymous
reviewers to improve the manuscript.
This work was supported, in part, by the MIT-IBM Watson AI Lab under Agreement No. W1771646, and by MIT Lincoln Laboratory.

\section*{Impact Statement}
This paper presents work whose goal is to advance the field of Sequential Decision Making. 
There are many potential societal consequences of our work, none of which we feel must be specifically highlighted here.

\bibliographystyle{icml2024}
\bibliography{ref}

\newcommand{\noopsort}[1]{}
\begin{thebibliography}{38}
\providecommand{\natexlab}[1]{#1}
\providecommand{\url}[1]{\texttt{#1}}
\expandafter\ifx\csname urlstyle\endcsname\relax
  \providecommand{\doi}[1]{doi: #1}\else
  \providecommand{\doi}{doi: \begingroup \urlstyle{rm}\Url}\fi

\bibitem[Abbasi-Yadkori et~al.(2011)Abbasi-Yadkori, P{\'a}l, and Szepesv{\'a}ri]{Abbasi-Yadkori--Pal--Szepesvari2011}
Abbasi-Yadkori, Y., P{\'a}l, D., and Szepesv{\'a}ri, C.
\newblock Improved algorithms for linear stochastic bandits.
\newblock In \emph{Adv. Neural Inf. Proc. Syst.}, volume~24, 2011.

\bibitem[Cho et~al.(2024)Cho, Gan, and Kallus]{Cho--Gan--Kallus2024}
Cho, B., Gan, K., and Kallus, N.
\newblock Peeking with {PEAK}: Sequential, nonparametric composite hypothesis tests for means of multiple data streams.
\newblock \emph{arXiv preprint arXiv:2402.06122}, 2024.

\bibitem[Chugg et~al.(2023)Chugg, Wang, and Ramdas]{Chugg--Wang--Ramdas2023}
Chugg, B., Wang, H., and Ramdas, A.
\newblock Time-uniform confidence spheres for means of random vectors.
\newblock \emph{arXiv preprint arXiv:2311.08168}, 2023.

\bibitem[Clopper \& Pearson(1934)Clopper and Pearson]{Clopper--Pearson1934}
Clopper, C.~J. and Pearson, E.~S.
\newblock The use of confidence or fiducial limits illustrated in the case of the binomial.
\newblock \emph{Biometrika}, 26\penalty0 (4):\penalty0 404--413, 1934.

\bibitem[Cover(1991)]{Cover1991}
Cover, T.~M.
\newblock Universal portfolios.
\newblock \emph{Math. Financ.}, 1\penalty0 (1):\penalty0 1--29, 1991.

\bibitem[Cover \& Ordentlich(1996)Cover and Ordentlich]{Cover--Ordentlich1996}
Cover, T.~M. and Ordentlich, E.
\newblock Universal portfolios with side information.
\newblock \emph{{IEEE} Trans. Inf. Theory}, 42\penalty0 (2):\penalty0 348--363, 1996.

\bibitem[Cover \& Thomas(2006)Cover and Thomas]{Cover--Thomas2006}
Cover, T.~M. and Thomas, J.~A.
\newblock \emph{Elements of information theory}.
\newblock John Wiley \& Sons, 2006.

\bibitem[Darling \& Robbins(1967)Darling and Robbins]{Darling--Robbins1967}
Darling, D.~A. and Robbins, H.
\newblock Confidence sequences for mean, variance, and median.
\newblock \emph{Proc. Natl. Acad. Sci. U. S. A.}, 58\penalty0 (1):\penalty0 66, 1967.

\bibitem[Evidently(2023)]{Evidently2023}
Evidently.
\newblock How evidently calculates results, 2023.
\newblock URL \url{https://docs.aws.amazon.com/AmazonCloudWatch/latest/monitoring/CloudWatch-Evidently-calculate-results.html}.

\bibitem[Hoeffding(1963)]{Hoeffding1963}
Hoeffding, W.
\newblock Probability inequalities for sums of bounded random variables.
\newblock \emph{J. Am. Stat. Assoc.}, 58\penalty0 (301):\penalty0 13--30, 1963.

\bibitem[Howard et~al.(2021)Howard, Ramdas, McAuliffe, and Sekhon]{Howard--Ramdas--McAuliffe--Sekhon2021}
Howard, S.~R., Ramdas, A., McAuliffe, J., and Sekhon, J.
\newblock Time-uniform, nonparametric, nonasymptotic confidence sequences.
\newblock \emph{Ann. Statist.}, 49\penalty0 (2):\penalty0 1055--1080, 2021.

\bibitem[J{\'e}z{\'e}quel et~al.(2022)J{\'e}z{\'e}quel, Ostrovskii, and Gaillard]{Jezequel--Ostrovskii--Gaillard2022}
J{\'e}z{\'e}quel, R., Ostrovskii, D.~M., and Gaillard, P.
\newblock Efficient and near-optimal online portfolio selection.
\newblock \emph{arXiv e-prints}, pp.\  arXiv--2209, 2022.

\bibitem[Jun \& Orabona(2019)Jun and Orabona]{Jun--Orabona2019}
Jun, K.-S. and Orabona, F.
\newblock Parameter-free online convex optimization with sub-exponential noise.
\newblock In \emph{Conf. Learn. Theory}, pp.\  1802--1823. PMLR, 2019.

\bibitem[Kalai \& Vempala(2002)Kalai and Vempala]{Kalai--Vempala2002}
Kalai, A.~T. and Vempala, S.
\newblock Efficient algorithms for universal portfolios.
\newblock \emph{J. Mach. Learn. Res.}, pp.\  423--440, 2002.

\bibitem[Krichevsky \& Trofimov(1981)Krichevsky and Trofimov]{Krichevsky--Trofimov1981}
Krichevsky, R. and Trofimov, V.
\newblock The performance of universal encoding.
\newblock \emph{{IEEE} Trans. Inf. Theory}, 27\penalty0 (2):\penalty0 199--207, 1981.

\bibitem[Lai(1976)]{Lai1976}
Lai, T.~L.
\newblock On confidence sequences.
\newblock \emph{Ann. Statist.}, 4\penalty0 (2):\penalty0 265--280, 1976.

\bibitem[Li et~al.(2023)Li, Li, and Dai]{Li--Li--Dai2023}
Li, J., Li, Y., and Dai, X.
\newblock {{Li, Li, and Dai's Contribution to the Discussion of ``Estimating Means of Bounded Random Variables by Betting'' by Waudby-Smith and Ramdas}}.
\newblock \emph{J. R. Stat. Soc. Series B Stat. Methodol.}, pp.\  qkad111, October 2023.
\newblock ISSN 1369-7412.
\newblock \doi{10.1093/jrsssb/qkad111}.

\bibitem[Luo et~al.(2018)Luo, Wei, and Zheng]{Luo--Wei--Zheng2018}
Luo, H., Wei, C.-Y., and Zheng, K.
\newblock Efficient online portfolio with logarithmic regret.
\newblock In \emph{Adv. Neural Inf. Proc. Syst.}, volume~31, 2018.

\bibitem[Malloy et~al.(2020)Malloy, Tripathy, and Nowak]{Malloy--Tripathy--Nowak2020}
Malloy, M.~L., Tripathy, A., and Nowak, R.~D.
\newblock {Optimal Confidence Regions for the Multinomial Parameter}.
\newblock \emph{arXiv}, February 2020.

\bibitem[Manole \& Ramdas(2023)Manole and Ramdas]{Manole--Ramdas2023}
Manole, T. and Ramdas, A.
\newblock Martingale methods for sequential estimation of convex functionals and divergences.
\newblock \emph{{IEEE} Trans. Inf. Theory}, 69\penalty0 (7):\penalty0 4641--4658, 2023.
\newblock \doi{10.1109/TIT.2023.3250099}.

\bibitem[Mardia et~al.(2020)Mardia, Jiao, T{\'a}nczos, Nowak, and Weissman]{Mardia--Jiao--Tanczos--Nowak--Weissman2020}
Mardia, J., Jiao, J., T{\'a}nczos, E., Nowak, R.~D., and Weissman, T.
\newblock Concentration inequalities for the empirical distribution of discrete distributions: beyond the method of types.
\newblock \emph{Inf. Inference}, 9\penalty0 (4):\penalty0 813--850, 2020.

\bibitem[Mhammedi \& Rakhlin(2022)Mhammedi and Rakhlin]{Mhammedi--Rakhlin2022}
Mhammedi, Z. and Rakhlin, A.
\newblock Damped online {N}ewton step for portfolio selection.
\newblock \emph{arXiv preprint arXiv:2202.07574}, 2022.

\bibitem[Orabona \& Jun(2024)Orabona and Jun]{Orabona--Jun2021}
Orabona, F. and Jun, K.-S.
\newblock Tight concentrations and confidence sequences from the regret of universal portfolio.
\newblock \emph{{IEEE} Trans. Inf. Theory}, 70\penalty0 (1):\penalty0 436--455, 2024.
\newblock \doi{10.1109/TIT.2023.3330187}.
\newblock arXiv:2110.14099.

\bibitem[Ramdas et~al.(2020)Ramdas, Ruf, Larsson, and Koolen]{Ramdas--Ruf--Larsson--Koolen2020}
Ramdas, A., Ruf, J., Larsson, M., and Koolen, W.
\newblock Admissible anytime-valid sequential inference must rely on nonnegative martingales.
\newblock \emph{arXiv preprint arXiv:2009.03167}, September 2020.

\bibitem[Ryu \& Bhatt(2022)Ryu and Bhatt]{Ryu--Bhatt2022}
Ryu, J.~J. and Bhatt, A.
\newblock On confidence sequences for bounded random processes via universal gambling strategies.
\newblock \emph{arXiv preprint arXiv:2207.12382}, 2022.

\bibitem[Shekhar \& Ramdas(2023)Shekhar and Ramdas]{Shekhar--Ramdas2023}
Shekhar, S. and Ramdas, A.
\newblock On the near-optimality of betting confidence sets for bounded means.
\newblock \emph{arXiv preprint arXiv:2310.01547}, 2023.

\bibitem[Sison \& Glaz(1995)Sison and Glaz]{Sison--Glaz1995}
Sison, C.~P. and Glaz, J.
\newblock Simultaneous confidence intervals and sample size determination for multinomial proportions.
\newblock \emph{J. Am. Statist. Assoc.}, 90\penalty0 (429):\penalty0 366--369, 1995.

\bibitem[Tsai et~al.(2024)Tsai, Lin, and Li]{Tsai--Lin--Li2024}
Tsai, C.-E., Lin, Y.-T., and Li, Y.-H.
\newblock Data-dependent bounds for online portfolio selection without {L}ipschitzness and smoothness.
\newblock In \emph{Adv. Neural Inf. Proc. Syst.}, volume~36, 2024.

\bibitem[Vakili et~al.(2021)Vakili, Scarlett, and Javidi]{Vakili--Scarlett--Javidi2021}
Vakili, S., Scarlett, J., and Javidi, T.
\newblock Open problem: Tight online confidence intervals for {RKHS} elements.
\newblock In \emph{Conf. Learn. Theory}, pp.\  4647--4652. PMLR, 2021.

\bibitem[van Erven et~al.(2020)van Erven, Van~der Hoeven, Kot{\l}owski, and Koolen]{VanErven--VanderHoeven--Kotlowski--Koolen2020}
van Erven, T., Van~der Hoeven, D., Kot{\l}owski, W., and Koolen, W.~M.
\newblock Open problem: {F}ast and optimal online portfolio selection.
\newblock In \emph{Conf. Learn. Theory}, pp.\  3864--3869. PMLR, 2020.

\bibitem[Ville(1939)]{Ville1939}
Ville, J.
\newblock Etude critique de la notion de collectif.
\newblock \emph{Bull. Amer. Math. Soc}, 45\penalty0 (11):\penalty0 824, 1939.

\bibitem[Waudby-Smith \& Ramdas(2020)Waudby-Smith and Ramdas]{Waudby-Smith--Ramdas2020a}
Waudby-Smith, I. and Ramdas, A.
\newblock Confidence sequences for sampling without replacement.
\newblock In Larochelle, H., Ranzato, M., Hadsell, R., Balcan, M., and Lin, H. (eds.), \emph{Adv. Neural Inf. Proc. Syst.}, volume~33, pp.\  20204--20214. Curran Associates, Inc., 2020.

\bibitem[Waudby-Smith \& Ramdas(2024)Waudby-Smith and Ramdas]{Waudby-Smith--Ramdas2020b}
Waudby-Smith, I. and Ramdas, A.
\newblock Estimating means of bounded random variables by betting.
\newblock \emph{J. R. Stat. Soc. B}, 86\penalty0 (1):\penalty0 1--27, 2024.

\bibitem[Waudby-Smith et~al.(2021)Waudby-Smith, Stark, and Ramdas]{Waudby-Smith--Stark--Ramdas2021}
Waudby-Smith, I., Stark, P., and Ramdas, A.
\newblock {RiLACS: Risk Limiting Audits via Confidence Sequences}.
\newblock \emph{E-Vote-ID 2021}, pp.\  130, 2021.

\bibitem[Whitehouse et~al.(2023)Whitehouse, Wu, and Ramdas]{Whitehouse--Wu--Ramdas2023}
Whitehouse, J., Wu, Z.~S., and Ramdas, A.
\newblock Time-uniform self-normalized concentration for vector-valued processes.
\newblock \emph{arXiv preprint arXiv:2310.09100}, 2023.

\bibitem[Xie \& Barron(2000)Xie and Barron]{Xie--Barron2000}
Xie, Q. and Barron, A.~R.
\newblock Asymptotic minimax regret for data compression, gambling, and prediction.
\newblock \emph{{IEEE} Trans. Inf. Theory}, 46\penalty0 (2):\penalty0 431--445, 2000.

\bibitem[Yang \& Berger(1996)Yang and Berger]{Yang--Berger1996}
Yang, R. and Berger, J.~O.
\newblock \emph{A catalog of noninformative priors}, volume 1018.
\newblock Institute of Statistics and Decision Sciences, Duke University Durham, NC, USA, 1996.

\bibitem[Zimmert et~al.(2022)Zimmert, Agarwal, and Kale]{Zimmert--Agarwal--Kale2022}
Zimmert, J., Agarwal, N., and Kale, S.
\newblock Pushing the efficiency-regret {P}areto frontier for online learning of portfolios and quantum states.
\newblock \emph{arXiv preprint arXiv:2202.02765}, 2022.

\end{thebibliography}

\newpage
\appendix
\onecolumn

\section{Deferred Proofs}
\label{app:proofs}
\subsection{Proof of Theorem~\ref{thm:kt}}
\begin{proof}
The convexity of the KT CS in (b) follows from the observation that $\mv\mapsto \av^\intercal \log\mv^{-1}$ is convex for any $\av\in\Real_{\ge0}^K$, and the sublevel set of a convex function is always convex.
To prove (c), observe that
\begin{align*}
\Cc^{\kt}(\zv^t;\d)
&\stackrel{(i)}{\subset} \Bigl\{\mv\in\Delta^{K-1}\suchthat \hat{\muv}_t^\intercal\mv^{-1} < (\d q^{\kt}(\kv(\zv^t)))^{-\frac{1}{t}}\Bigr\}\\
&\stackrel{(ii)}{\subset} \Bigl\{\mv\in\Delta^{K-1}\suchthat m_j > \frac{k_j(\zv^t)}{t}(\d q^{\kt}(\kv(\zv^t)))^{\frac{1}{t}}\Bigr\},
\end{align*}
where $(i)$ follows from Jensen's inequality and $(ii)$ holds by deploying 
the lower bound $\hat{\muv}_t^\intercal\mv^{-1} \ge \frac{k_j(\zv^t)}{t}\frac{1}{m_j}$ for any $j\in[K]$.
The last property (d) follows from noting that by Stirling's formula, the second term in the definition of the CS behaves as
\[
\frac{1}{t}\log\frac{e^{-t H(\hat{\muv}_t})}{q^{\kt}(t\hat{\muv}_t)}\sim \frac{K-1}{2t}\log t + o(1)
\]
for $t$ sufficiently large.
\end{proof}

\subsection{Proof of Theorem~\ref{thm:guarantee}}
The proof relies on a simple combinatorial inequality.
It suffices to show that 
\[
\binom{K}{\kb} 
\ge 
\binom{K+M}{\kb+\mv} 
\Bigl(\frac{\kb}{K}\Bigr)^{\mv}
\]
for $K,M\in\Natural$, $\kb,\mv\in\Natural_+^{m}$ such that $\sum_{i=1}^m k_i=K$ and $\sum_{i=1}^m m_i = M$, since the desired inequality follows by plugging in
$K\gets N-t$, $\kb\gets \nv-\Kv_t$, $M\gets t$, $\mv\gets \Kv_t$.
The desired inequality follows from Lemma~\ref{lem:general_inequality}, which is stated and proved below.
\qed

\begin{lemma}
\label{lem:general_inequality}
Let $k_1,\ldots,k_d\ge 1$ and $m_1,\ldots,m_d\ge 0$. Then,
\[
\frac{(\sum_{i=1}^d k_i)! (\sum_{i=1}^d k_i)^{\sum_{i=1}^d m_i}}{(\sum_{i=1}^d(k_i+m_i))!} 
\ge \prod_{i=1}^d \frac{k_i! k_i^{m_i}}{(k_i+m_i)!}.
\]
\end{lemma}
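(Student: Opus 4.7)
The plan is to take logarithms and reduce the inequality to a sorted-order comparison between two multisets of rationals. Writing $K := \sum_{i=1}^d k_i$ and $M := \sum_{i=1}^d m_i$, and using the identity $(n+r)!/n! = \prod_{\ell=1}^{r}(n+\ell)$, the claim is equivalent to
$$
\prod_{i=1}^{d}\prod_{j=1}^{m_i}\Bigl(1+\tfrac{j}{k_i}\Bigr) \;\ge\; \prod_{\ell=1}^{M}\Bigl(1+\tfrac{\ell}{K}\Bigr),
$$
so after passing to logs it suffices to compare a sum of $M$ terms of the form $\log(1+j/k_i)$ against $M$ terms of the form $\log(1+\ell/K)$.

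Next I would introduce the two multisets $L := \{j/k_i : 1\le i\le d,\ 1\le j\le m_i\}$ and $R := \{\ell/K : 1\le \ell\le M\}$, each of cardinality $M$. The key step is to prove that when both are sorted in increasing order, the $n$-th smallest entry of $L$ is at least the $n$-th smallest entry of $R$ for every $n$. Because $\log(1+\cdot)$ is increasing on $[0,\infty)$, term-by-term summation of this coordinatewise domination immediately produces the desired inequality, and hence the lemma.

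The sorted-order domination is equivalent to the pointwise counting statement $|\{l\in L:l\le x\}|\le|\{r\in R:r\le x\}|$ for every $x\ge 0$. The right-hand side equals $\min(M,\lfloor xK\rfloor)$, while the left-hand side equals $\sum_i\min(m_i,\lfloor xk_i\rfloor)$, which is bounded above both by $\sum_i m_i=M$ and by $\sum_i\lfloor xk_i\rfloor$. The only nontrivial ingredient is the elementary floor-sum bound $\sum_i\lfloor xk_i\rfloor\le\lfloor x\sum_i k_i\rfloor=\lfloor xK\rfloor$, which iterates $\lfloor a\rfloor+\lfloor b\rfloor\le\lfloor a+b\rfloor$; this is precisely the step where the global constraint $\sum_i k_i=K$ is consumed.

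I expect the main difficulty to be conceptual rather than technical: recognizing the stochastic-dominance framing of the problem. I briefly considered induction on $M$, which fails because raising some $m_{i_0}$ by one multiplies the left-hand side by $(1+(m_{i_0}+1)/k_{i_0})$ and the right-hand side by $(1+(M+1)/K)$, neither of which dominates the other uniformly; the matching/dominance viewpoint bypasses this obstruction by comparing entries after a global rearrangement. Once the framing is in place, the remaining ingredients---taking logarithms, sorting, and the floor-sum inequality---are elementary.
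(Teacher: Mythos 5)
Your proof is correct, and it takes a genuinely different route from the paper's. The paper reduces to the case $d=2$ (merging coordinates two at a time) and then recasts the two-coordinate inequality probabilistically: with $Z\sim \Binom(k_1+k_2,\tfrac{k_1}{k_1+k_2})$ and $W\sim\Binom(m_1+m_2,\tfrac{k_1}{k_1+k_2})$ independent, the claim becomes $\P(Z=k_1)\ge\P(Z+W=k_1+m_1)$, which follows because $k_1$ is the mode of $Z$. Your argument instead works for general $d$ directly: after the same factorial cancellation $(n+r)!/n!=\prod_{\ell=1}^r(n+\ell)$, which both proofs implicitly exploit, you reduce to $\prod_i\prod_{j=1}^{m_i}(1+j/k_i)\ge\prod_{\ell=1}^{M}(1+\ell/K)$ and establish sorted-order domination of the multiset $\{j/k_i\}$ over $\{\ell/K\}$ via the counting identity
\[
\textstyle\sum_i\min(m_i,\lfloor xk_i\rfloor)\;\le\;\min\bigl(M,\lfloor xK\rfloor\bigr),
\]
whose only nontrivial ingredient is superadditivity of the floor function. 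I checked the equivalence between the sorted-order statement and the counting statement, and the two bounds $\sum_i\min(m_i,\lfloor xk_i\rfloor)\le M$ and $\le\lfloor xK\rfloor$; all steps are sound (and indeed, once the sorted entries dominate pairwise, you do not even need logarithms, since all factors exceed $1$). What each approach buys: the paper's binomial-mode argument is short but requires the reduction to $d=2$ and a small probabilistic fact; yours is entirely elementary and combinatorial, avoids the reduction, and makes transparent exactly where the constraint $\sum_i k_i=K$ is used. Your remark on why naive induction on $M$ fails is also accurate as stated for a fixed choice of which $m_{i_0}$ to increment.
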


\begin{proof}
It suffices to show for $d=2$, \ie
\begin{align}
\label{eq:simple_ineq}
\binom{k_1+k_2}{k_1} 
\ge 
\binom{k_1+k_2+m_1+m_2}{k_1+m_1} 
\frac{k_1^{m_1} k_2^{m_2}}{(k_1+k_2)^{m_1+m_2}}.
\end{align}
Rearranging the terms, we can write
\[
\binom{k_1+k_2}{k_1} \Bigl(\frac{k_1}{k_1+k_2}\Bigr)^{k_1} \Bigl(\frac{k_2}{k_1+k_2}\Bigr)^{k_2}
\ge 
\binom{k_1+k_2+m_1+m_2}{k_1+m_1}  \Bigl(\frac{k_1}{k_1+k_2}\Bigr)^{k_1+m_1} \Bigl(\frac{k_2}{k_1+k_2}\Bigr)^{k_2+m_2}.
\]
If we consider two independent binomial random variables $Z\sim \Binom(k_1+k_2,\frac{k_1}{k_1+k_2})$ and $W\sim \Binom(m_1+m_2,\frac{k_1}{k_1+k_2})$, then the inequality is equivalent to 
\[
\P(Z=k_1) \ge \P(Z+W=k_1+m_1),
\]
since $Z+W\sim \Binom(k_1+k_2+m_1+m_2, \frac{k_1}{k_1+k_2})$.
Finally, the inequality readily follows from the following chain of inequalities:
\begin{align*}
\P(Z+W=k_1+m_1)
&= \sum_{0\le z\le \min\{k_1+m_1, k_1+k_2\}} \P(Z=z) \P(W=k_1+m_1-z)\\
&\le \P(Z=k_1) \sum_{0\le z\le \min\{k_1+m_1, k_1+k_2\}} \P(W=k_1+m_1-z)\\
&\le \P(Z=k_1).
\end{align*}
Here, we used the fact that $\max_{0\le z\le k_1+k_2} \P(Z=z) = \P(Z=k_1)$ in the first inequality. This completes the proof.
\end{proof}

\subsection{Proof of Theorem~\ref{thm:up}}
\begin{proof}
The convexity of UP CS follows from the fact that $\mv\mapsto \mv^{-\kv}$ is log-convex for each $\kv\in\Gc_{K,t}$, a sum of any log-convex functions is also log-convex. 
The proof for (b) is similar to that of Theorem~\ref{thm:kt}(b) and thus omitted.
We can prove (c) via the Lemma~\ref{lem:small_wealth} stated below, which states that no constant bettor can earn any money from \chrk{}$(\hat{\muv}_t^{-1})$ for any outcome sequence $\yv^t$, \emph{deterministically}.
\end{proof}

\begin{lemma}
\label{lem:small_wealth}
For $\yv^t\in(\Delta^{K-1})^t$, let $\hat{\muv}_t=\frac{1}{t}\sum_{i=1}^t \yv_i$.
For any $\bv\in\Delta^{K-1}$, we have
\[
{\wealth_t^\bv(\yv^t;{\hat{\muv}_t}^{-1})}
\le 1.
\]
\end{lemma}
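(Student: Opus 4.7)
The plan is to apply the arithmetic-geometric mean (AM-GM) inequality directly to the product form of the wealth, and then swap the order of summation to exploit the definition of $\hat{\muv}_t$. Writing out the wealth explicitly via \eqref{eq:wealth_const_betting},
\[
\wealth_t^\bv(\yv^t;\hat{\muv}_t^{-1})
= \prod_{i=1}^t \sum_{j=1}^K \frac{b_j y_{ij}}{\hat{\mu}_{tj}},
\]
so each factor is a nonnegative real, and the whole expression is a product of $t$ such factors.

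The key step is to bound the product by the $t$-th power of the average of these factors via AM-GM:
\[
\prod_{i=1}^t \Bigl(\sum_{j=1}^K \frac{b_j y_{ij}}{\hat{\mu}_{tj}}\Bigr)
\le \Bigl(\frac{1}{t}\sum_{i=1}^t \sum_{j=1}^K \frac{b_j y_{ij}}{\hat{\mu}_{tj}}\Bigr)^t.
\]
Now swap the two summations on the right-hand side and recognize that $\frac{1}{t}\sum_{i=1}^t y_{ij} = \hat{\mu}_{tj}$ by definition, so that $\frac{b_j}{\hat{\mu}_{tj}}\cdot \hat{\mu}_{tj}=b_j$. The inner sum then collapses to $\sum_{j=1}^K b_j = 1$ because $\bv\in\Delta^{K-1}$. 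Raising $1$ to the $t$-th power yields the bound of $1$, which is exactly the claim.

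The only point requiring care is the degenerate case where $\hat{\mu}_{tj}=0$ for some $j$; in that case $y_{ij}=0$ for all $i\le t$, and the corresponding term $b_j y_{ij}/\hat{\mu}_{tj}$ should be interpreted as $0$ (or, equivalently, one simply restricts the index set to $\{j\suchthat \hat{\mu}_{tj}>0\}$, which makes no change to either side of the bound). I do not expect any real obstacle here: the whole argument is essentially a one-line AM-GM followed by a summation swap, and the conclusion $\hat{\muv}_t\in \Cc_t^{\up}(\Yv^t;\d)$ in Theorem~\ref{thm:up}(c) follows immediately by mixing the deterministic pointwise bound against the Dirichlet prior on $\bv$, since the mixture of quantities all $\le 1$ remains $\le 1 < 1/\d$.
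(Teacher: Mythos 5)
Your proof is correct, and it takes a genuinely different route from the paper's. You bound the product $\prod_{i=1}^t\bigl(\sum_j b_j y_{ij}/\hat{\mu}_{tj}\bigr)$ by the $t$-th power of the arithmetic mean of its factors via AM--GM, and then a swap of the two sums together with $\frac{1}{t}\sum_i y_{ij}=\hat{\mu}_{tj}$ collapses the mean to $\sum_j b_j=1$; your handling of coordinates with $\hat{\mu}_{tj}=0$ is also fine, since restricting to the support only makes the mean $\le 1$, which still suffices. The paper instead expands the wealth over $\Gc_{K,t}$, observes that $\bv\mapsto\wealth_t^{\bv}(\yv^t;\hat{\muv}_t^{-1})$ is log-concave as a sum of log-concave monomials, computes the gradient at $\bv=\hat{\muv}_t$ and shows it is normal to the simplex, concluding that $\hat{\muv}_t$ is the global maximizer with value $1$. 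Your argument is more elementary: it avoids the combinatorial expansion, the log-concavity machinery, and the gradient calculation entirely, and works directly with the product-of-sums form \eqref{eq:wealth_const_betting}. What the paper's optimization viewpoint buys in exchange is slightly more information --- it identifies $\hat{\muv}_t$ as the exact maximizer over constant bettors, not merely an upper bound of $1$ --- but for establishing the lemma and hence Theorem~\ref{thm:up}(c), your one-line AM--GM argument is sufficient and cleaner.
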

\begin{proof}
Note that, from \eqref{eq:wealth_const_betting}, we can write
\[
\wealth_t^\bv(\yv^t;\hat{\muv}_t^{-1})
=\sum_{\kv\in\Gc_t}  \frac{\bv^{\kv}}{\hat{\muv}_t^{\kv}} \yv^t[\kv].
\]
Since it is easy to verify that $\wealth_t^{\bv=\hat{\muv}_t}(\yv^t;\hat{\muv}_t^{-1})=1$, we only need to show that the global maximum of the function $\bv\mapsto \wealth_t^{\bv}(\yv^t;\hat{\muv}_t^{-1})$ is attained when $\bv=\hat{\muv}_t$.
Note that the function $\bv\mapsto \wealth_t^\bv(\yv^t;\hat{\muv}_t^{-1})$ is log-concave, since $\bv\mapsto \bv^{\kv}$ is log-concave and a sum of any log-concave functions is also log-concave.
Moreover, it is easy to check that the gradient of the function with respect to $\bv$ at $\bv=\hat{\muv}_t$ is perpendicular to the probability simplex $\Delta_m$.
That is, we have
\[
\frac{\partial}{\partial b_j}\wealth_t^\bv(\yv^t;\hat{\muv}_t^{-1})|_{\bv=\hat{\muv}_t}
= \sum_{\kv\in\Gc_t}\yv^t[\kv]\Bigl(\frac{k_j}{\hat{\mu}_{tj}}-\sum_{j'\in[m]\backslash\{j\}} \frac{k_{j'}}{\hat{\mu}_{tj'}}\Bigr)
=-(m-2)t,
\]
where the last equality holds since $\sum_{\kv\in\Gc_t} k_j\yv^t[\kv]=\sum_{i=1}^t y_{ij} =t\hat{\mu}_{tj}$.
Hence, the gradient \[\nabla_{\bv}\wealth_t^\bv(\yv^t;\hat{\muv}_t^{-1})|_{\bv=\hat{\muv}_t}=-(m-2)t\ones_K\] is perpendicular to the probability simplex $\Delta^{K-1}$, and thus together with the log-concavity, we can conclude that $\bv=\hat{\muv}_t$ is the global maximizer.
\end{proof}

\section{General Reduction for CS with WoR Sampling}
\label{app:sec:wor}
\begin{proposition}[\citet{Waudby-Smith--Ramdas2020a}]
\label{prop:wor}
Suppose that a sequence $\{W_t(\tilde{\Zv}^t;\mv)\}_{t=0}^\infty$ with $\wealth_0(\varphi;\mv)=1$ is a martingale for $\mv=\tilde{\muv}$, if a stochastic process $(\tilde{\Zv})_{t=1}^\infty$ satisfies $\E[\tilde{\Zv}_t|\Fc_{t-1}]=\tilde{\muv}$.
If we define
\begin{align*}
\Cc_{\wor}(\Zv^t;\d)
&\defeq \Bigl\{\mv\suchthat
W_t(\Zv^t;\mv_{N,t+1}^{\wor}(\Zv^t))<\frac{1}{\d}
\Bigr\}\\
&= \Bigl\{\mv\suchthat
W_t\Bigl(\Zv^t;\frac{N\mv-\sum_{i=1}^{t} \Zv_i}{N-t}\Bigr)<\frac{1}{\d}
\Bigr\}
\end{align*}
for $t\in[N-1]$, then 
\[
\P(\forall t\in[N-1], \muv\in \Cc_{\wor}(\Zv^t;\d)) \ge 1-\d,
\]
given that $(\Zv_t)_{t=1}^N$ is sampled WoR from $\{\zv_i\}_{i=1}^N$.
\end{proposition}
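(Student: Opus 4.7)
Plan: I would apply Ville's inequality (Theorem~\ref{thm:ville}) to the nonnegative process $M_t := W_t(\Zv^t; \muv_{N,t+1}^{\wor}(\Zv^t))$ for $0 \le t \le N-1$, evaluated at the true population mean $\mv = \muv$. Since the hypothesis gives $M_0 = W_0(\emptyset; \muv) = 1$, the entire argument reduces to showing that $(M_t)_{t=0}^{N-1}$ is a nonnegative supermartingale with respect to $\Fc_t = \sigma(\Zv^t)$. Once that is in hand, Ville's inequality delivers $\P(\exists\, t \in [N-1] : M_t \ge 1/\d) \le \d$, which is exactly the negation of the event $\{\forall\, t \in [N-1],\ \muv \in \Cc_{\wor}(\Zv^t; \d)\}$, giving the claimed coverage.

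The substantive work is therefore to verify the one-step inequality $\E[M_t \mid \Fc_{t-1}] \le M_{t-1}$ under WoR sampling. My approach is to exploit the exchangeability of the WoR draws together with the black-box martingale hypothesis on $W_t$. The central observation is the rewriting $\muv_{N,t+1}^{\wor}(\Zv^t) = \frac{1}{N-t}\sum_{i=t+1}^{N}\Zv_i$, so the plug-in equals the empirical mean of the still-unseen samples. Conditional on $\Fc_{t-1}$, by exchangeability the tail $(\Zv_t,\ldots,\Zv_N)$ is a uniformly random ordering of the reduced population, which itself has mean $\muv_{N,t}^{\wor}(\Zv^{t-1})$. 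The plan is to couple $(\Zv_t,\ldots,\Zv_N)$ with an auxiliary process of constant conditional mean -- for instance via the reverse-time sequence $\Zv'_s := \Zv_{N-s+1}$, which is equidistributed with the original WoR process -- and then invoke the constant-mean martingale hypothesis on $W_t$ to reduce $\E[M_t \mid \Fc_{t-1}]$ to $M_{t-1}$.

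The main obstacle I anticipate is converting this exchangeability/coupling intuition into the pointwise supermartingale inequality when $W_t$ is available only as a black-box functional and the plug-in process $t \mapsto \muv_{N,t+1}^{\wor}(\Zv^t)$ is itself a reverse-time martingale. A useful technical lever is the one-step identity $\muv_{N,t+1}^{\wor}(\Zv^t) = \muv_{N,t}^{\wor}(\Zv^{t-1}) + \frac{\muv_{N,t}^{\wor}(\Zv^{t-1}) - \Zv_t}{N-t}$, which shows that the plug-in evolves by an $\Fc_t$-measurable increment whose $\Fc_{t-1}$-conditional mean is zero. Combined with the constant-mean martingale hypothesis on $W_t$, this predictable structure should deliver the supermartingale property through a Doob-decomposition-style bookkeeping applied to the plug-in process. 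At that point, Ville's inequality closes the argument.
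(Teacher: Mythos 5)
You should first note that the paper does not actually prove this proposition---it is imported verbatim from the cited reference---so there is no in-paper argument to compare against; what follows measures your plan against the standard proof of that reference. Your outer structure is right: Ville's inequality applied to $M_t \defeq W_t(\Zv^t;\muv_{N,t+1}^{\wor}(\Zv^t))$, with everything riding on the supermartingale property, and your identities $\muv_{N,t}^{\wor}(\Zv^{t-1})=\E[\Zv_t|\Fc_{t-1}]$ and the mean-zero one-step increment of the plug-in are correct. The gap is that neither of your two proposed routes to the supermartingale inequality can work. The coupling route is stuck at the start: the hypothesis on $W$ is only usable for a process with \emph{constant} conditional mean, and under WoR sampling no such auxiliary process exists beyond a single step (the reversed sequence $\Zv'_s=\Zv_{N-s+1}$ is equidistributed with the original but its conditional mean still moves), so the black-box martingale hypothesis gives you nothing about increments along which the parameter argument changes. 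The Doob-decomposition route proves the \emph{wrong direction}: $W_t(\zv^t;\cdot)$ is generically convex near the relevant point, so substituting a conditionally mean-zero perturbation of the parameter \emph{increases} the conditional expectation. Concretely, take $W_t(\zv^t;\mv)=\prod_{i\le t}(1+\lambdav^\intercal(\zv_i-\mv))$ (a legitimate instance of the hypothesis) and the population $\{\ev_1,\ev_1,\ev_2\}$ with $N=3$; writing $u=\lambda_1-\lambda_2$, on the event $\Zv_1=\ev_1$ one computes $M_1=1+u/2$ while $\E[M_2|\Fc_1]=\tfrac12\bigl((1+u)^2+(1-u)\bigr)=1+u/2+u^2/2>M_1$. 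So the literally plugged-in process is a submartingale, not a supermartingale, and Ville does not apply to it.

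What the actual argument requires is opening the black box: the wealth must have the multiplicative form $W_t(\zv^t;\mv)=\prod_{i=1}^t f_i(\zv_i;\mv)$ with $\E[f_i(\Zv_i;\mv)\cond\Fc_{i-1}]=1$ whenever $\E[\Zv_i\cond\Fc_{i-1}]=\mv$, and the WoR substitution must be made \emph{per factor and predictably}, i.e.\ the $i$-th factor receives $\mv_{N,i}^{\wor}(\Zv^{i-1})$ rather than every factor receiving the terminal $\mv_{N,t+1}^{\wor}(\Zv^t)$. Then, since $\E[\Zv_i\cond\Fc_{i-1}]=\muv_{N,i}^{\wor}(\Zv^{i-1})$ exactly under WoR, each factor has conditional mean one at $\mv=\muv$, the product is a nonnegative martingale, and Ville closes the argument---no exchangeability, reversal, or coupling is needed. (For the KT wealth this predictable per-factor substitution is precisely what reproduces the PPR process of the paper, which is why that process is a genuine martingale.) So the missing idea is structural: you cannot establish the proposition treating $W_t$ as an opaque functional of $(\zv^t,\mv)$ together with the martingale property of the plug-in path; you need the per-round factorization and the predictability of the substituted parameter.
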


\section{Posterior-Prior Ratio Interpretation of PPR CS}
\label{app:wor_ppr}

As alluded to earlier, the PPR sequence~\eqref{eq:ppr} can be derived as a posterior-prior ratio as follows.
Note that we can write
\[
\mathsf{R}(\Yv^t;\mv)= \frac{\pi_0(N\mv)}{\pi_t(N\mv)},
\]
if we define
\[
\pi_t(\nv)\defeq 
\frac{B(\nv+\alphav)}{B(\Kv_t+\alphav)}
\binom{N-t}{\nv-\Kv_t}
\]
for $t\ge 0$ and $\nv\in\Gc_{N,K}$.
This quantity $\pi_t(\nv)$ is the Dirichlet multinomial distribution, which is the posterior distribution induced by the hypergeometric observation model, while $\pi_0(\nv)$ corresponds to the prior distribution.
More precisely, if we assume $\Yv_t|(\Yv_1,\ldots,\Yv_{t-1})\sim \mathsf{MultHyperGeo}(N-(t-1), \nv-\sv_{t-1}, 1)$ and $\nv\sim \mathsf{DirMult}(N,\av)$. 
Then, by the conjugacy, one can show that $\nv-\Kv_t|(\Yv_1,\ldots,\Yv_t)\sim \mathsf{DirMult}(N-t,\Kv_t+\av)$, which leads to the definition of $\pi_t(\nv)$; see \citep[Theorem C.1]{Waudby-Smith--Ramdas2020a}.
It can be shown that any prior-to-posterior ratio sequence is a martingale if $\mv=\muv$, and thus can be used to construct a CS.

\end{document}